\documentclass[11pt,a4paper]{article}

\usepackage[margin=1in]{geometry}
\usepackage{amsmath}
\usepackage{amssymb}
\usepackage{amsthm}
\usepackage{array}
\usepackage{booktabs}
\usepackage{graphicx}
\graphicspath{{../}}
\usepackage{tikz}
\usepackage[numbers]{natbib}
\usepackage[hidelinks,unicode]{hyperref}

\newcommand{\doi}[1]{\href{https://doi.org/#1}{\nolinkurl{doi:#1}}}

\hypersetup{
  pdftitle={Certified Split Points for Parallel Lexing: Exact and Modulo Discarded Tokens},
  pdfauthor={Nicklas Nidhögg},
  pdfsubject={Parallel lexical analysis; deterministic finite automata; maximal munch tokenization},
  pdfkeywords={parallel lexing, tokenization, deterministic finite automata, maximal munch, split points}
}

\theoremstyle{definition}
\newtheorem{definition}{Definition}

\theoremstyle{plain}
\newtheorem{lemma}{Lemma}
\newtheorem{theorem}{Theorem}
\newtheorem{corollary}{Corollary}
\newtheorem{proposition}{Proposition}

\newcommand{\code}[1]{\texttt{#1}}

\title{Certified Split Points for Parallel Lexing:\\Exact and Modulo Discarded Tokens}
\author{Nicklas Nidh\"ogg\\[0.4em] \normalsize Independent Researcher\\ \normalsize
\texttt{nicklas.nidhogg@gmail.com}\\ \normalsize ORCID: 0009-0006-6161-6150}
\date{August 2026}

\begin{document}

\maketitle

\begin{abstract}
Table-driven DFA lexing is sequential: each transition depends on the previous byte's state. Scanning one input in
parallel needs each chunk's entry state, which existing methods recover by simulation, speculation, prescanning, or
overlap. We give two conditions under which none is needed. For a longest-match scanner restarting from $q_0$ at every
token boundary, a byte $b$ is a \emph{certified split symbol} when no reachable state other than $q_0$ has a
$b$-transition whose target can reach acceptance, and $q_0$ is not re-entrant if it has one. Every occurrence of such a
byte in completely tokenizable input begins a token, so chunks starting there reproduce the serial sequence of kinds and
lengths by ordered concatenation. The condition is necessary as well as sufficient, and fragile: one string, comment, or
whitespace run can eliminate every useful certificate, and comments and whitespace are usually discarded. We therefore
weaken the guarantee to equality after deleting a declared discarded set, and give a second condition, \emph{sound and
more permissive, coinciding with the first when the discarded set is empty and strictly gaining on suitable pairs of
token set and discarded set, but conservative rather than exact}, decided from the same tables, answered by a second
constant-time one-bit query. It recovers newline for a conventional C-like tokenization and tab, newline and carriage
return for JSON, without altering their token definitions, and refuses it where block comments are unrestricted.
Splitting at exact certificates in the munch library~\cite{munch} reaches 92.6--95.3\% parallel efficiency at eight
threads on a restricted CPU set, on a 512\,MiB dense corpus beyond last-level cache, and a $3.46$--$3.94\times$
end-to-end speedup at four threads, across two benchmark revisions on one machine. It turns delimiter-based parallel
lexing from a language-specific assumption into a property a compiler checks.
\end{abstract}

\section{Introduction}

A table-compiled scanner performs, for each byte examined during a token match, a transition of the form \[
\code{state = table[row(byte) + state]}. \] Within one match the transition-table load is loop-carried: its address
depends on the state the previous load returned, so throughput is bounded by the load-to-use latency of the cache
holding the table. Token boundaries reset the state to $q_0$, which can expose independent short matches to
out-of-order execution, but the recurrence still limits long table walks. This work exposes additional independent
chains by scanning several regions of one input concurrently.

Splitting is the obstacle. A scanner dropped at an arbitrary offset does not know the automaton state there: the
offset may fall inside a string literal, halfway through an identifier, or in the middle of a multi-byte operator.
Starting from an incorrect entry state can change the emitted stream until the runs converge, and whether and when
they converge depends on both the automaton and the input.

Practice has an answer already: cut at a newline and rescan a little on each side of the cut until the two scans agree.
It bets that a newline begins a token and pays for the bet with the rescan, and where it matters the bet loses and the
repair has no bound. Inside a block comment or a string that may span lines a newline begins nothing, so a chunk entered
there scans the rest of the comment as if it were code; and the rescan meant to repair this may have to run back to the
comment's opener or forward to its terminator, neither distance bounded by anything in the token set. The certificate is
the exact form of the same idea with the bet removed. Rather than assuming which bytes begin tokens, it asks the
compiled automaton, once, at construction, and tests the answer in one bit per byte. Where cutting at newline was sound
all along, newline is what comes back; where it was not, the answer is a refusal rather than a rescan, and
Section~\ref{sec:applicability} shows that the usual C-like token set with block comments is refused, and what it would
cost to change that.

This report describes a property that removes the uncertainty for a useful class of token sets, rather than managing
it. The property is a per-symbol certificate, extracted from the compiled automaton at construction time, such that
every occurrence of a certified byte in every completely tokenizable input begins a token. For a completely
tokenizable input, chunk boundaries placed immediately before those occurrences need no speculation and no
reconciliation: ordered concatenation of the chunk streams equals the serial stream.

\begin{samepage}
The contributions are:

\begin{itemize}
\item a static condition on a compiled token automaton identifying the bytes at which a longest-match scanner is
provably
between tokens, including the re-entrancy requirement without which the natural criterion is unsound
(Section~\ref{sec:certificate});
\item a proof that the condition is not only sufficient but necessary, once the analysis is restricted to states that an
input can reach and from which acceptance is still reachable, so that if no useful byte certifies, no byte occurring
in a completely tokenizable input is universally safe before every occurrence (Section~\ref{sec:necessity});
\item an $O(|Q|\,|\Sigma|)$ derivation over the logical compiled transition table, a boundary planner, and an exactness
theorem for the resulting parallel scan, implemented and released in a general-purpose lexer library
(Sections~\ref{sec:derivation} and~\ref{sec:evaluation}); the planner and executor decide boundaries with the exact
condition, and the evaluation measures that path;
\item a second condition, weakening equality to hold only after the tokens a caller discards are deleted: \emph{sound
and
conservative rather than complete}: every exact certificate remains certified modulo $I$, the inclusion strict for
some token sets and discarded sets. It is decided from the same tables and answered by a second constant-time one-bit
query. It recovers newline for the conventional C-like and JSON tokenizations that certify nothing exactly, and ships
as a query rather than in the planner, so a caller wanting it places the boundaries itself (Section~\ref{sec:modulo});
\item a controlled applicability study of fifteen token sets under both conditions, showing where each condition yields
useful bytes, how one added token kind can eliminate every useful exact certificate, and when line-based splitting is
sound for the C-like tokenizations studied (Section~\ref{sec:applicability}).
\end{itemize}
\end{samepage}

Table~\ref{tab:results} states what the paper establishes and where, so a reader can see the whole claim before the
definitions arrive.

\begin{table}[!t]
\centering
\begin{tabular}{>{\raggedright\arraybackslash}p{6.2cm}>{\raggedright\arraybackslash}p{5.1cm}l}
\toprule
Question & Answer & Where \\
\midrule
When may a chunk boundary be taken without knowing what precedes it? & No live state but a non-re-entrant $q_0$ has a
    live transition on the byte & Theorem~\ref{thm:main} \\
\addlinespace
Is that condition merely sufficient? & No: it is necessary as well & Theorem~\ref{thm:necessity} \\
\addlinespace
What does it cost to decide? & Time linear in the compiled tables, then one bit per byte & Section~\ref{sec:derivation}
    \\
\addlinespace
Do real token sets satisfy it? & 15 token sets measured; the conventional C-like and JSON rows certify no useful byte
    exactly & Table~\ref{tab:applicability} \\
\addlinespace
Can a token set that certifies no useful byte be rescued? & Weakening to equality modulo discarded tokens recovers
    newline there; eight of the fifteen rows gain bytes & Section~\ref{sec:modulo}, Table~\ref{tab:applicability} \\
\addlinespace
Can it be bought by design? & In the studied grammar, where the block comment is the one token spanning lines, once
    newline is its own token a line-bounded block comment suffices; barring the kinds from each other's openers buys
    nothing & Table~\ref{tab:applicability} \\
\addlinespace
Does it pay? & 92.6--95.3\% parallel efficiency at eight threads, $3.46$--$3.94\times$ end to end at four &
    Section~\ref{sec:evaluation} \\
\bottomrule
\end{tabular}
\caption{What this paper establishes. Every row is stated and proved or measured at the location named; the
applicability results are asserted cell by cell by \code{figures/applicability.cpp}, and the throughput row is one
machine and two benchmark revisions, as Section~\ref{sec:evaluation} details.}
\label{tab:results}
\end{table}

The paper separates the result from the study around it. The result is Sections~\ref{sec:prelim}--\ref{sec:modulo}: the
definitions, the exact certificate with its necessity proof and the re-entrancy subtlety, and the relaxed condition
modulo discarded tokens. The study and the artifact are Sections~\ref{sec:derivation}--\ref{sec:evaluation}: the
derivation and planner as released, the applicability of both conditions over fifteen token sets, and the measured
parallel scan. Section~\ref{sec:prior} places the certificate among the existing answers to the entry-state problem
before the result, and Section~\ref{sec:related} returns to that literature after it. A companion report generalizes the
certificate from single bytes to short byte windows~\cite{nidhogg2026splitwindows}, and a second reuses the same
certificates to choose where a scan resumes after an error~\cite{nidhogg2026panicmode}; nothing here depends on either.

\section{Prior approaches}
\label{sec:prior}

Published solutions accept the unknown-state problem and manage it.

\paragraph{Compile the simulation into the automaton.} Simultaneous finite automata~\cite{sinya2013sfa} take a state
of the extended automaton to be a mapping from entry state to exit state, so one ordinary pass over a chunk yields
that chunk's whole transfer function; the mappings compose associatively and chunks combine in a parallel reduction.
The per-state simulation is therefore paid at construction rather than at scan time, and the authors report almost no
runtime overhead. What it costs is the size of the constructed automaton: a state is a map on states, so the worst
case is $n^n$ from a DFA with $n$ states and $2^{n^2}$ from an NFA, though for the expressions they survey it is
usually far smaller: of the more than 20,000 SNORT expressions they measure, 98.6\% give a D-SFA no larger than the
square of the minimal DFA, 279 exceed it and six exceed its cube. The reduction step also remains. Composition can
also be applied directly instead of compiled in, and in that form it is a classical answer: Hillis and
Steele~\cite{hillis1986dataparallel} treat each character as a unary function on states, observe that the induced
composition is associative, and recover the state after every character with one parallel-prefix operation. Their
worked example is lexing program text, and Yang~\cite{yang1996mealy} shows that the example needs care. The scan
requires an automaton that can tell a token has begun as soon as its first character is read; that holds for
one-character lookahead but not in general, so an automaton needing more must be transformed before the scan applies
at all. Attaching token output to transitions rather than to states, as a Mealy machine does, is what makes the
lookahead behaviour representable. The object composed for maximal-munch lexing is therefore richer than a map from
states to states. Data-parallel finite-state machines~\cite{mytkowicz2014dpfsm} take the same route on SIMD and
multicore hardware, enumerating transitions from every possible start state so that the enumeration is exactly the
transition function, and citing Hillis and Steele as the basis for doing so; the resulting transfer function is exact;
after the chunk summaries determine the correct entry states, the chunks are run again to produce output, so nothing
is guessed and nothing needs repair. The cost is a factor of $|Q|$ in work, which convergence reduces in practice,
though the authors report that convergence to a single state is rare. A GPU lexer applies the same scan to recover the
complete state stream, the state after every input position rather than only each chunk's entry
state~\cite{voetter2021gpu}. Holding one function table per input position costs $O(|Q|n)$ space, so that work
precomputes the reachable compositions and identifies each by an integer, reducing composition to a two-dimensional
lookup. The trade is the same one the simultaneous construction makes, paid in a table rather than in states.

\paragraph{Carry fewer entry states.} Composition is exact but pays for every state. A second family attacks that cost
by shrinking the set a chunk must carry. One line of work uses single-state speculation: Jones et
al.~\cite{jones2009browser} observe that in lexing the automaton reaches a stable state within a few characters, so
prepending a short suffix of its left neighbour to each chunk usually recovers the entry state, and Prabhu et
al.~\cite{prabhu2010speculative} make the pattern explicit as language constructs, prediction, validation on
completion, re-execution on a miss, with lexical analysis as the motivating workload and the entry state predicted by
scanning a few characters before the cut. Luchaup et al.~\cite{luchaup2009speculation} place the same wager for
intrusion-detection signature matching: a secondary scan enters its chunk in the DFA's start state and records its
state after every character, and the primary validates by running on into the chunk and comparing against that history
until the two couple, so even a wrong guess usually costs only the short prefix before convergence. Enumerative
speculation sits deliberately between the extremes: rather than speculating on a single state or enumerating all of
them, it speculates transitions from several states chosen by a lookback over the preceding
input~\cite{jiang2017enumerative}. Reduced-interface DFAs~\cite{borsotti2025ridfa} attack the same overhead from the
automaton side, cutting the number of starting states a chunk automaton must carry by combining an NFA's state
reduction with deterministic transitions. The two leave different residues: a lookback speculation can miss, and a
miss costs a rescan, whereas a reduced-interface DFA carries its interface in full and pays instead in proportion to
that interface's size. Both approaches may still carry more than one candidate entry state, and that extra work is one
limit on the speedup.

\paragraph{Relocate cuts to a language's separators.} Parallel lexers have long been built by cutting near equal
divisions and sliding each cut to a language-specific separator~\cite{barenghi2015parallel}. The separator and the
search bound are chosen by hand, and finding a separator does not by itself remove the entry-state problem: the
accompanying parallel lexical analysis enumerates the possible start states of a chunk rather than deducing one, with
each worker carrying one computation per alternative, at most three in their worked cases and four in the stated worst
case, when the bounded separator search finds none~\cite{barenghi2015parallel}. The technique is therefore separator
relocation combined with state enumeration, not a proof that the separator is safe. Its worked cases are the direct
antecedent of Section~\ref{sec:applicability}: newline preserves the JSON token stream that survives discarding, since
no surviving lexeme admits it, yet it is rejected in practice because generated JSON may contain
none~\cite{li2021plex}. Lua fails for a stronger reason than any certificate could address: its long-bracket
delimiters \code{[=$^n$[} and \code{]=$^n$]} must agree on $n$, so the lexical grammar is not regular, the set of
possible delimiters is unbounded, and with it the set of possible entry states; no fixed lookahead can decide which
delimiter, if any, encloses a chunk~\cite{barenghi2015parallel}. Barenghi et al. recover a workable schema only by
constraining the language, admitting just \code{[[} and \code{]]} for strings and requiring the multiline comment's
closing delimiter to occur at line end, after which newline does serve as their split point. What is missing is a way
to decide, from the token set alone, whether every relevant occurrence of a proposed byte is safe to cut immediately
before.

\paragraph{Prescan for context.} Plex~\cite{li2021plex} removes the need for delimiters entirely. From the scanner's
own DFA it derives a backtrack-free prescanning automaton, runs a prescan to recover each chunk's context, and
then scans the chunks in parallel without language-specific delimiter analysis. The price is a pass over the input;
the benefit is that it applies to grammars that certify nothing here.

\paragraph{Analyse the grammar for streaming.} Deciding something about a maximal-munch token set statically, before
any input, is an established move rather than a new one. Yang, Tsay, and Chan~\cite{yang2002longest} decide
automatically whether the longest-match rule is applicable at all for a given token set and parser grammar, and
identify precisely the situations in which it is not; the question is different from ours, but the setting is the
same. StreamTok~\cite{li2026streaming} is methodologically nearer still: it also analyses such a grammar statically
and also partitions grammars into those its technique serves and those it does not. What it computes is different. Its
maximum token neighbour distance bounds how far a longest-match decision can depend on future input, which is what
makes bounded-memory streaming possible; it derives no input-independent split symbols or certified chunk boundaries,
and parallelising it is left as future work there. The two analyses are complementary: StreamTok bounds the context
needed to confirm a token, the certificate identifies positions where no preceding context is needed at all.

\paragraph{Recover a restart point after an edit.} Incremental lexers face the mirror of this
question~\cite{wagner1997lexing}: after an edit, how far back must re-lexing begin for the result to equal a full
re-scan. They answer it dynamically and per edit, saving the batch machine's state with each token as it is created so
analysis can restart at any token boundary, and tracking lookahead dependencies as they arise rather than bounding
them in advance. The certificate answers a static question instead, once per token set and before any input exists,
and the two are complementary: a restart point is a fact about one input and its history, a certified byte is a fact
about the grammar. An incremental divide-and-conquer lexer~\cite{hugo2015incremental} takes the other route open to
it, storing a result for every possible entry state of a fragment and composing the resulting transition maps, which
places it with the all-state family rather than with boundary certification.

\paragraph{Restrict the automaton.} Holub and \v{S}tekr's parallel run~\cite{holub2009parallel} is exact and
efficient for $k$-local automata, in which any $k$ consecutive symbols force a unique state regardless of the starting
state. The property is uniform over the whole automaton rather than per symbol, and a lexical grammar need not have
it.

\paragraph{Realign after the fact.} Parallel tokenization for language-model vocabularies faces the same boundary
problem, and the overlap-based answer to it, extending chunks so neighbours share a region and merging inside it, does
not guarantee the sequential result. LoPT~\cite{shao2026lopt} remains overlap-based, matches overlap tokens by
character position, and retries with a doubled chunk whenever no overlap token matches. For WordPiece and BPE, its
Theorem~3.1 is conditional on a sufficient overlap condition: equality with sequential tokenization holds when every
position-aligned overlap spans more characters than the longest vocabulary token. TokTier~\cite{zhang2026toktier}
reports, and we cite the finding as theirs, that LoPT's published experimental configuration ``does not satisfy its
safety theorem's stated precondition'' and that their clean-room reproduction ``found inputs on which the length
threshold stated in its paper admits a boundary outside the premise of its theorem''.

\paragraph{Assume a delimiter.} Data systems split logs and CSV at newlines because ``records do not contain
newlines'', adjusting each cut to the next delimiter. The assumption is per-format and informal, and it is famously
unsound for CSV, where RFC 4180 explicitly permits CRLF inside double-quoted fields~\cite{shafranovich2005csv}, which
is why speculative CSV parsing is a research topic~\cite{ge2019csv}; massively parallel delimiter
parsing~\cite{stehle2020parparaw} attacks the same context problem on GPUs. Parallel lexical analysis has also been
built by choosing cut positions directly: Barve and Joshi~\cite{barve2014parallel} take newline, whitespace and
selected language constructs as pivot elements. Those pivots are chosen for the language by hand rather than derived
from its token set, which is precisely the step the certificate supplies. Format-specific structural scans hand-derive
comparable facts per format: Mison~\cite{li2017mison} and simdjson~\cite{langdale2019simdjson} locate JSON's
structural characters and mask away the occurrences inside strings, while Parabix~\cite{cameron2008parabix} builds XML
lexical and validation bit streams, a different construction aimed at a different format.

Among the preceding methods surveyed here, none derives from a compiled token automaton the complete set of raw-input
bytes safe to cut immediately before at every occurrence in a completely tokenizable input. Lexer generators come
closest: re2c analyses the compiled automaton to reject rules in which one user-declared end-of-input sentinel may
occur before a lexeme ends~\cite{re2cdocs}. That is a terminal-in-lexeme test for a declared byte, a safe-after
property under the model used here, rather than a derivation of the safe-before set, and Section~\ref{sec:related}
shows the two properties are incomparable.

\section{Preliminaries}
\label{sec:prelim}

Let $\Sigma$ be the byte alphabet. We take a compiled token set to be a deterministic automaton $A = (Q, \Sigma,
\delta, q_0, \tau)$ with a partial transition function $\delta : Q \times \Sigma \rightharpoonup Q$, an initial state
$q_0 \in Q$, and a partial accepting map $\tau : Q \rightharpoonup T$ assigning a token to accepting states; $Q$ and
$T$ are finite, $\Sigma = \{0, \ldots, 255\}$, and the single-valued $\tau$ already contains the winner of any
same-length priority tie resolved during compilation, lower priority number first and lower token ID second. We write
$\delta^*(q, u)$ for the state reached from $q$ on $u \in \Sigma^*$ when every transition along the way is defined,
and say $\delta^*(q,u)$ is undefined otherwise.

The scanner is the usual longest-match loop. At offset $i$ of an input $w \in \Sigma^*$ it starts in $q_0$, consumes
bytes while transitions are defined, remembers the last position at which $\tau$ was defined, emits the token found
there, and restarts at that position in $q_0$. If no accepting position is reached, the scan halts at $i$. Tokens are
positive-length: an accepting position is eligible only after at least one symbol has been consumed, so an accepting
$q_0$ never emits the empty word. If no positive-length accepting prefix exists the scan halts; emitting a length-zero
match there would restart at the same offset forever, which is what Section~\ref{sec:subtlety} needs ruled out. The
whole-input entry point, and each chunk-local scan used by the parallel entry point, treat a final length-zero match
as failure and halt at that offset, one failed chunk stopping no other; the single-match entry point reports the
length-zero match instead, and a caller looping over it must reject that itself. The composite system is therefore not
a single automaton run: it is a run that \emph{resets to $q_0$ at every token boundary}, and the certificate exploits
exactly that reset.

Write $\mathrm{tok}(w)$ for the sequence of (token, length) pairs the scanner emits on $w$, and $\mathrm{con}(w)$ for
the scanner's final committed offset, equivalently the sum of the lengths in $\mathrm{tok}(w)$; bytes inspected during
failed lookahead beyond the last accepted prefix are not counted. Call $w$ \emph{completely tokenizable} when
$\mathrm{con}(w) = |w|$, and call an offset a \emph{token boundary} of $w$ when it is $0$ or a cumulative sum of the
emitted lengths; $\mathrm{con}(w)$ is therefore always a boundary, and $|w|$ is one exactly when $w$ is completely
tokenizable.

\section{The certificate}
\label{sec:certificate}

A state can lie on an emitted token only if some input reaches it and some continuation from it still accepts, and the
certificate is defined over those alone. Write \[ Q^+ = \{\, q \in Q \mid \delta^*(q_0, u) = q \text{ for some } u \in
\Sigma^*, \text{ and } \delta^*(q, v) \text{ is accepting for some } v \in \Sigma^* \,\} \] for the states that are both
reachable and co-accessible, and let $\delta^+(q, a) = \delta(q, a)$ when that transition is defined and $\delta(q, a)
\in Q^+$, and be undefined otherwise. All that follows is stated over the \emph{live subautomaton} $A^+ = (Q^+, \Sigma,
\delta^+, q_0, \tau)$. Section~\ref{sec:necessity} shows why neither half of the restriction is cosmetic. A pattern
denoting the empty language leaves reachable states outside $Q^+$ whose transitions would otherwise de-certify symbols
that no token can contain; and a state no input enters can witness nothing, so admitting one would let a transition no
scan ever takes de-certify a symbol every scan treats as safe. If $q_0 \notin Q^+$ the token set accepts nothing, and no
symbol is useful; we assume $q_0 \in Q^+$ throughout. One consequence is used repeatedly below: the match path of any
emitted token lies in $A^+$ throughout, since every transition along it leads to that token's accepting position, so
reasoning about emitted tokens may read $\delta$ as $\delta^+$ without further comment. A second consequence is of the
same kind: every $q \in Q^+$ is reachable by a path lying in $A^+$ throughout, since an intermediate state of a
$\delta$-path from $q_0$ to $q$ is reachable by the path's prefix and co-accessible by appending the path's remainder
and then an accepting continuation from $q$. Wherever reachability of a state in $Q^+$ is invoked below,
$(\delta^+)^*(q_0, u) = q$ may therefore be read directly.

\begin{definition}[Re-entrant initial state]
\label{def:reentrant}
$q_0$ is \emph{re-entrant} if $(\delta^+)^*(q_0, u) = q_0$ for some $u \in \Sigma^+$.
\end{definition}

Because every state of $Q^+$ is reachable, Definition~\ref{def:reentrant} is equivalent to $q_0$ having an incoming
live transition from a state of $Q^+$, which is how an implementation tests it.

\begin{definition}[Certified split symbol]
\label{def:certified}
A symbol $b \in \Sigma$ is \emph{certified} if every $q \in Q^+$ with $\delta^+(q, b)$ defined satisfies $q = q_0$,
and if $\delta^+(q_0, b)$ is defined then $q_0$ is not re-entrant.
\end{definition}

The exemption for $q_0$ is what makes the definition useful rather than vacuous: a token may legitimately begin with
$b$. The exemption is sound only while no input can return the automaton to $q_0$ mid-scan, which is precisely the
re-entrancy condition. Section~\ref{sec:subtlety} shows that dropping it makes the certificate false.

The definition has an instance every reader has met. Take the token set whose tokens are the encoding forms of
UTF-8~\cite{yergeau2003utf8}, one token per form. After a byte that begins a form only continuation bytes,
\code{0x80}--\code{0xBF}, are admitted until the form completes, and no form is a prefix of another, so no live state
other than $q_0$ has a transition on a byte that begins a form and no transition returns to $q_0$. Every such byte,
\code{0x00}--\code{0x7F}, \code{0xC2}--\code{0xDF}, \code{0xE0}--\code{0xEF} and \code{0xF0}--\code{0xF4}, is therefore
certified, and no continuation byte is, since only states other than $q_0$ consume one; the bytes no form contains,
\code{0xC0}, \code{0xC1} and \code{0xF5}--\code{0xFF}, are certified vacuously, and the shipped predicate withholds them
as Corollary~\ref{cor:useful} says. That is the property the RFC lists among the encoding's characteristics, ``character
boundaries are easily found from anywhere in an octet stream'', recovered from the compiled table by a definition that
knows nothing of UTF-8; \code{figures/applicability.cpp} asserts it byte for byte. The question this paper asks is which
token sets have that property, and Section~\ref{sec:applicability} answers for fifteen of them.

\begin{lemma}[Occurrences begin tokens]
\label{lem:boundary}
Let $b$ be certified and let $w$ be completely tokenizable. Then every offset of $w$ holding $b$ is a token boundary
of $w$.
\end{lemma}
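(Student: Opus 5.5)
Let $w$ be completely tokenizable and let $j$ be an offset with $w_j = b$. Because $|w|$ is a token boundary, the boundaries $0 = t_0 < t_1 < \dots < t_m = |w|$ cover $w$ exactly, so there is a unique $k$ with $t_k \le j < t_{k+1}$. The plan is to argue by contradiction: suppose $j$ is not a boundary, so $t_k < j < t_{k+1}$, and show that the emitted token $w[t_k, t_{k+1})$ then contains a live $b$-transition forbidden by Definition~\ref{def:certified}.

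The emitted token is matched from $q_0$ at $t_k$. As noted after the definition of $Q^+$, its match path lies in $A^+$ throughout, because every state along it leads to the accepting position $t_{k+1}$. So with $u = w[t_k, j)$ we have $q := (\delta^+)^*(q_0, u) \in Q^+$, and $\delta^+(q, b)$ is defined since the path continues through position $j$. Here $u$ is nonempty because $j > t_k$. Certification then forces $q = q_0$. Two facts follow: $\delta^+(q_0,b)$ is defined, and $(\delta^+)^*(q_0,u) = q_0$ with $u \in \Sigma^+$. The second says $q_0$ is re-entrant by Definition~\ref{def:reentrant}. The second clause of Definition~\ref{def:certified} forbids exactly this combination, which gives the contradiction. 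Hence $j = t_k$ is a boundary.

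I expect the only delicate point to be justifying that every intermediate transition of the emitted match is a $\delta^+$-transition, so that the hypotheses, stated over $A^+$ rather than $A$, actually apply. Each state on the path is reachable via a prefix of the lexeme and co-accessible via the remaining suffix of the lexeme, which ends in an accepting state. So it lies in $Q^+$, and each transition along the path lands in $Q^+$. The paper already records this consequence, so I would invoke it directly.

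The remaining care is bookkeeping. Positive-length tokens guarantee $t_k < t_{k+1}$, so the partition is well defined. The bytes examined during failed lookahead beyond $t_{k+1}$ are irrelevant, because the argument uses only positions strictly inside the committed lexeme.
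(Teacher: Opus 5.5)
Your proof is correct and follows the paper's argument: you place the non-boundary occurrence strictly inside an emitted token, read the prefix $u$ over $\delta^+$ using the fact that emitted match paths lie in $A^+$, force $q = q_0$ from certification, and contradict the non-re-entrancy clause. Your extra bookkeeping, covering the boundary partition and why failed lookahead is irrelevant, only makes explicit what the paper's proof leaves implicit.
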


\begin{proof}
Let $w_i = b$ and suppose $i$ is not a token boundary. Since $w$ is completely tokenizable, every offset is either a
token boundary or lies strictly inside some emitted token, so $i$ lies strictly inside a token that starts at some $s
< i$. Let $u = w_s \cdots w_{i-1}$, which is nonempty, and let $q = (\delta^+)^*(q_0, u)$, the state immediately
before $b$ is consumed. The token's match path continues through offset $i$, so $\delta^+(q, b)$ is defined. By
Definition~\ref{def:certified}, $q = q_0$, hence $(\delta^+)^*(q_0, u) = q_0$ with $u$ nonempty, so $q_0$ is
re-entrant by Definition~\ref{def:reentrant}. But $\delta^+(q_0, b) = \delta^+(q, b)$ is defined, so
Definition~\ref{def:certified} also requires $q_0$ not to be re-entrant, a contradiction.
\end{proof}

\begin{lemma}[Prefix stability]
\label{lem:truncation}
Let $w$ be completely tokenizable, let $s$ be a token boundary of $w$, and let $v$ be any prefix of $w$ with $|v| \geq
s$. Then $v$ and $w$ emit the same tokens on $[0, s)$, and the scan of $v$ reaches $s$ in state $q_0$.
\end{lemma}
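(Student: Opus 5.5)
The plan is an induction on the emitted tokens of $w$, showing that the scan of $v$ makes exactly the same decisions as the scan of $w$ at every boundary below $s$. Let the boundaries of $w$ be $0 = s_0 < s_1 < \cdots < s_m = |w|$, with $s = s_k$ for some $k$. The claim to prove by induction on $j \le k$ is that the scan of $v$ reaches $s_j$ in state $q_0$ with tokens $\mathrm{tok}(w)$ restricted to $[0,s_j)$. The base case $j=0$ is the scanner's definition. For the inductive step, both scans restart at $s_j$ in $q_0$, and we must show that the longest-match decision at $s_j$ is the same on $v$ as on $w$.

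The key observation is a locality property of the longest-match loop. The match chosen at $s_j$ depends only on the bytes examined, namely the maximal run of defined transitions from $q_0$ starting at $s_j$ and the last accepting position within it. On $w$ the emitted token ends at $s_{j+1} \le s \le |v|$, so every byte of the winning match $w_{s_j}\cdots w_{s_{j+1}-1}$ lies inside $v$. In $v$ the same path is therefore available, and the position $s_{j+1}$ is accepting with the same $\tau$-value. The $v$-scan's longest accepting prefix thus has length at least $s_{j+1}-s_j$.

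The main obstacle is to show that the $v$-scan cannot find a longer match. A longer match in $v$ would be an accepting prefix $v_{s_j}\cdots v_{t-1}$ with $t > s_{j+1}$ and $t \le |v|$. Since $v$ is a prefix of $w$, that string is also an accepting extension in $w$ along defined transitions. This contradicts the choice of $s_{j+1}$ as the last accepting position in $w$'s scan, because truncation only removes lookahead beyond $|v|$ and never adds any. The same token value is emitted in both scans because $\tau$ is single-valued on the shared end state.

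After the emission the $v$-scan restarts at $s_{j+1}$ in $q_0$, which closes the induction. Taking $j=k$ gives both conclusions. The only care needed beyond this is to confirm that positive-length eligibility plays no role here: each $w$-token has positive length, so each step advances strictly.
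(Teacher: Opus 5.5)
Your proof is correct and takes essentially the paper's route: an induction over the tokens before $s$, where both scans restart at the same offset in $q_0$, read identical bytes while $v$ lasts, and choose the same last accepting position, because the $w$-token ends at or before $s \le |v|$ and truncation can only remove accepting positions, never add them. The paper first bounds every recorded accepting position by $s$ and then concludes the two recorded sets are equal, whereas you bound the maximum from both sides directly, which comes to the same thing.
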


\begin{proof}
If $s = 0$ both claims are immediate: $[0, s)$ is empty and the scan of $v$ starts at offset $0$ in $q_0$. Assume $s >
0$. First, no token of $w$ beginning at some $s' < s$ records an accepting position after $s$: longest match takes the
last accepting position reached, so such a token would span past $s$, placing $s$ strictly inside it and contradicting
that $s$ is a boundary.

The rest is an induction over the tokens preceding $s$. Suppose both scans begin a token at the same offset $s' < s$
in $q_0$, which holds at offset $0$. From $s'$ the two scans read the same bytes, since $v$ agrees with $w$ byte for
byte as far as it goes, so they traverse the same states while both inputs last, and the accepting positions they
record at offsets up to $|v|$ coincide. By the bound above every accepting position this token records in $w$ lies at
or before $s \leq |v|$, so both scans record exactly the same set, take the same last member of it, and emit the same
kind and length; truncation removed only lookahead that failed. Both then restart in $q_0$ at the same next offset,
which is at most $s$. The restart offsets increase strictly, and $s$ is a boundary of $w$, so the induction reaches
offset $s$ exactly: every token before $s$ is emitted identically in both scans, and the scan of $v$ reaches $s$ in
$q_0$.
\end{proof}

\begin{theorem}[Split invariance]
\label{thm:main}
Let $w$ be completely tokenizable and let $0 = c_0 < c_1 < \cdots < c_m = |w|$ be offsets such that each interior
$c_j$ holds a certified symbol. Then
\[
\mathrm{tok}(w) \;=\; \mathrm{tok}(w_{[c_0,c_1)}) \cdot \mathrm{tok}(w_{[c_1,c_2)}) \cdots
\mathrm{tok}(w_{[c_{m-1},c_m)}),
\]
where $\cdot$ denotes concatenation of token sequences and $w_{[a,b)}$ the corresponding slice.
\end{theorem}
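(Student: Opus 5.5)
By Lemma~\ref{lem:boundary}, every interior cut $c_j$ holds a certified symbol and is therefore a token boundary of $w$. Since $c_m=|w|$ and $c_0=0$ are boundaries as well, the serial token sequence $\mathrm{tok}(w)$ factors as $S_0 \cdot S_1 \cdots S_{m-1}$, where $S_j$ is the run of emitted tokens whose start offsets lie in $[c_j, c_{j+1})$. Because $c_{j+1}$ is a boundary, each $S_j$ consists of exactly the tokens covering $[c_j,c_{j+1})$, and their lengths sum to $c_{j+1}-c_j$. The theorem then reduces to showing $\mathrm{tok}(w_{[c_j,c_{j+1})}) = S_j$ for each $j$, after which concatenation gives the claim.

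To prove this I would first strip off the prefix. Let $x = w_{[c_j,|w|)}$ be the suffix from $c_j$. Since $c_j$ is a boundary of $w$ and the serial scan restarts there in $q_0$, the scanner's behaviour from $c_j$ depends only on the bytes from $c_j$ onward. Hence $\mathrm{tok}(x) = S_j \cdot S_{j+1} \cdots S_{m-1}$, and $x$ is completely tokenizable. I would state this as a one-line observation: the scan loop is memoryless across token boundaries, since it resets to $q_0$ at every boundary. The token boundaries of $x$ are then the boundaries of $w$ shifted by $-c_j$, so in particular $c_{j+1}-c_j$ is a boundary of $x$.

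Next I would apply Lemma~\ref{lem:truncation} to $x$, with $s = c_{j+1}-c_j$ and with $v = w_{[c_j,c_{j+1})}$, the prefix of $x$ of length exactly $s$. The lemma says $v$ and $x$ emit the same tokens on $[0,s)$, namely $S_j$, and that the scan of $v$ reaches offset $s=|v|$ in $q_0$. Since $|v|=s$, the scan of $v$ then halts, with $\mathrm{con}(v)=|v|$, and emits nothing further. So $\mathrm{tok}(v)=S_j$.

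I expect the main obstacle to be the bookkeeping in this last step rather than any deep point. One must check that "emit the same tokens on $[0,s)$" accounts for every token $v$ emits. This holds because the scan of $v$ sits at offset $|v|$ after $S_j$ and cannot emit a positive-length token there. One must also check that the chunk-local convention (a final zero-length match counts as failure) does not add or drop anything. Both follow from the positive-length rule in Section~\ref{sec:prelim}. The suffix-restart observation is routine, but I would spell it out, since Lemma~\ref{lem:truncation} is stated only for prefixes.
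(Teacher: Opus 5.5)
Your proof is correct, but it takes a different route from the paper's after the shared first step, Lemma~\ref{lem:boundary}.

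The paper compares each chunk scan directly with the global scan and then needs the global scan to record no accepting position past $c_{j+1}$. It gets this by invoking the certificate a second time. The state $q$ reached just before the certified byte at $c_{j+1}$ cannot have $\delta^+(q,b)$ defined, so any lookahead across the cut either stops or enters states outside $Q^+$, and from there it can never accept.

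You use the certificate only in Lemma~\ref{lem:boundary} and derive everything else from the boundary property. The restart-in-$q_0$ observation makes the suffix from $c_j$ a completely tokenizable input with $c_{j+1}-c_j$ as a boundary. Lemma~\ref{lem:truncation} then finishes each chunk. Its first paragraph already gives the paper's ``second fact'' from longest match alone: a token ending at a boundary cannot have recorded acceptance beyond it.

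Your route is more modular and more general. It shows that cutting a completely tokenizable input at any of its own token boundaries preserves the stream. This is the reasoning the paper itself uses for the boundary case of Theorem~\ref{thm:trivia}. It also shows that the paper's second appeal to the certificate could be replaced by that longest-match argument.

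The paper's route gives an operational picture of the cut that yours does not. It shows how lookahead can cross a certified boundary: only through transitions into non-co-accessible states. That is the basis of its remark that the scan need not stop at $c_{j+1}$, only find nothing past it.
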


\begin{proof} The offsets are strictly increasing, so every chunk has positive length; the argument uses this where a
nonempty prefix before an interior boundary is taken, and a driver must supply distinct offsets to be inside the
statement at all. For empty $w$ the partition degenerates to the single offset $0$, both sides are the empty sequence,
and the claim is trivial; assume $|w| > 0$. By Lemma~\ref{lem:boundary} every interior $c_j$ is a token boundary of $w$,
and $c_0$ and $c_m$ are boundaries trivially. Two facts are then needed. First, the scanner is deterministic and
enters each token in state $q_0$ with no state carried across boundaries, so a scan started at $c_j$ begins exactly as
the global scan does there. Second, while matching the last token of the chunk the global scan records no accepting
position beyond $c_{j+1}$. Let $q$ be the state reached on the nonempty prefix ending just before $c_{j+1}$. If
$\delta^+(q, b)$ were defined for the certified $b$ at $c_{j+1}$, the argument of Lemma~\ref{lem:boundary} would force
$q = q_0$ and make $q_0$ re-entrant, which Definition~\ref{def:certified} forbids; so it is not. Either $\delta(q, b)$
is undefined, and the scan stops at $c_{j+1}$, or its target lies outside $Q^+$: were it inside, $q$ would itself be
co-accessible and $\delta^+(q, b)$ would be defined. In the second case the scan may read on but can never reach an
accepting state again. Note that the scan is therefore not required to \emph{stop} at $c_{j+1}$, only to find nothing
past it: lookahead may cross the boundary through transitions that lead nowhere. Either way the last accepting
position it recorded lies at or before $c_{j+1}$, and the chunk scan, seeing the same bytes from the same state,
records the same one, so the tokens emitted between $c_j$ and $c_{j+1}$ coincide. The second fact is needed only for
interior boundaries. When $j + 1 = m$ the right edge is the end of the input, which carries no symbol and where the
global scan has no bytes left either, so the final chunk sees exactly the suffix the global scan sees, from the same
state, and the two agree without further argument. Induction over $j$ gives the claim.
\end{proof}

\begin{lemma}[Boundaries before the first failure]
\label{lem:prefix}
Let $b$ be certified and let $w$ be any input. Every offset $i < \mathrm{con}(w)$ with $w_i = b$ is a token boundary
of the serial scan of $w$.
\end{lemma}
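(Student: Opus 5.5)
The plan is to repeat the argument of Lemma~\ref{lem:boundary}, replacing ``completely tokenizable'' by the observation that the scan of $w$ emits tokens covering exactly $[0,\mathrm{con}(w))$. By the definition of $\mathrm{con}(w)$ as the sum of the emitted lengths, the emitted tokens tile the interval $[0,\mathrm{con}(w))$ contiguously from offset $0$. Hence every offset $i<\mathrm{con}(w)$ is either a token boundary or lies strictly inside some emitted token starting at $s<i$ and ending at $e$ with $i<e\le\mathrm{con}(w)$. This dichotomy is the only place complete tokenizability was used in Lemma~\ref{lem:boundary}, and here it holds on $[0,\mathrm{con}(w))$ for every $w$. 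Note that the condition is strict: the offset $i=\mathrm{con}(w)$ itself is already a boundary, and offsets beyond it are not covered by the statement.

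Given that dichotomy, suppose $w_i=b$ and $i$ is not a boundary. Put $u=w_s\cdots w_{i-1}$, which is nonempty, and $q=\delta^*(q_0,u)$. The emitted token's match path runs from $s$ to $e>i$ and ends at an accepting position. Every state along it is therefore reachable and co-accessible, so the whole path lies in $A^+$, as the preliminaries note for any emitted token. In particular $q=(\delta^+)^*(q_0,u)$ and $\delta^+(q,b)$ is defined. Definition~\ref{def:certified} then forces $q=q_0$. Since $u\in\Sigma^+$, this makes $q_0$ re-entrant by Definition~\ref{def:reentrant}, while $\delta^+(q_0,b)$ is defined, which contradicts the second clause of the certificate.

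The main care point is that the token containing $i$ really is an emitted token of the serial scan, and not a failed lookahead. Lookahead past the last accepting position, including the final failed attempt at $\mathrm{con}(w)$, may cross $b$ through non-live transitions; this is why the statement restricts to offsets below $\mathrm{con}(w)$. Since the tiling uses only committed tokens, whose ends are accepting positions, the co-accessibility needed to read $\delta$ as $\delta^+$ is available. With that observation the rest is routine.
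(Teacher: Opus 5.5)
Your proposal is correct and is essentially the paper's own proof: the emitted tokens tile $[0,\mathrm{con}(w))$, so a non-boundary occurrence of $b$ below $\mathrm{con}(w)$ lies strictly inside an emitted token. From there the argument of Lemma~\ref{lem:boundary}, forcing $q = q_0$ and hence re-entrancy while $\delta^+(q_0,b)$ is defined, goes through unchanged without complete tokenizability; your explicit check that the committed token's match path lies in $A^+$ is the point the paper summarizes as the offsets below $\mathrm{con}(w)$ having been successfully traversed.
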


\begin{proof}
Every offset below $\mathrm{con}(w)$ is either a token boundary or lies strictly inside a token the scan emitted, so
if $i$ is not a boundary it lies strictly inside a token starting at some $s < i$. The argument of
Lemma~\ref{lem:boundary} applies unchanged to that token: the state $q$ reached on the nonempty $w_s \cdots w_{i-1}$
has $\delta^+(q,b)$ defined, hence $q = q_0$ by Definition~\ref{def:certified}, hence $q_0$ is re-entrant, which
Definition~\ref{def:certified} forbids once $\delta^+(q_0,b)$ is defined. Completeness of the tokenization is never
used, only that the offsets below $\mathrm{con}(w)$ were successfully traversed.
\end{proof}

\begin{corollary}[Malformed input] \label{cor:malformed} Take the partition of Theorem~\ref{thm:main}. If
$\mathrm{con}(w) < |w|$, let $[c_j, c_{j+1})$ be the chunk containing the first failure, which is the chunk's own
first byte when $\mathrm{con}(w) = c_j$. By Lemma~\ref{lem:prefix} every interior boundary below $\mathrm{con}(w)$ is
a token boundary of the serial scan, so the reasoning of Theorem~\ref{thm:main} applies to the chunks before it. Every
chunk before $c_j$ is fully consumed and contributes exactly the tokens the serial scan emits there, and the chunk
containing the failure halts at the same offset. The serial stream is therefore a prefix of the concatenation, but
chunks after the failure may emit tokens the serial scan never reaches. A caller must check every chunk's consumed
length before treating the concatenation as a successful tokenization.
\end{corollary}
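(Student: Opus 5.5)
We prove Corollary~\ref{cor:malformed} by rerunning the induction of Theorem~\ref{thm:main} and stopping it at the chunk that contains the first failure. Write $f = \mathrm{con}(w)$ and let $j$ be the index with $c_j \le f < c_{j+1}$. This index exists because $f < |w| = c_m$. First we record what Lemma~\ref{lem:prefix} gives. Every interior $c_k$ with $c_k < f$ holds a certified symbol at an offset below $\mathrm{con}(w)$, so it is a token boundary of the serial scan. If $c_j = f$, then $c_j$ is a boundary as well, because $\mathrm{con}(w)$ is always one. So $c_0, \dots, c_j$ are all serial token boundaries.

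Next we handle the chunks before $c_j$. For $k < j$ we show that $\mathrm{tok}(w_{[c_k,c_{k+1})})$ equals the serial tokens on $[c_k, c_{k+1})$ and that the chunk is fully consumed. The two facts used in the proof of Theorem~\ref{thm:main} carry over.
\begin{itemize}
\item The serial scan enters $c_k$ in $q_0$, since $c_k$ is a boundary.
\item While matching the last token before $c_{k+1}$, the serial scan records no accepting position past $c_{k+1}$.
\end{itemize}
The second fact uses the same argument as before. Let $q$ be the state reached on the nonempty prefix ending before the certified byte at $c_{k+1}$. If $\delta^+(q,b)$ were defined, the argument of Lemma~\ref{lem:boundary} would force $q = q_0$ and make $q_0$ re-entrant, which is forbidden. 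So any lookahead past $c_{k+1}$ never reaches acceptance again. Only the token-level argument is used here, never completeness of $w$; this is exactly what Lemma~\ref{lem:prefix} observes. Since the serial tokens tile $[c_k, c_{k+1})$ and the chunk scan sees the same bytes from the same state, the chunk emits those tokens and consumes the whole chunk.

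Then we treat chunk $j$. Its scan starts at $c_j$ in $q_0$, exactly as the serial scan does there. It matches the serial tokens on $[c_j, f)$ by the same one-token-at-a-time comparison: each of those serial tokens ends at or before $f < c_{j+1}$, and no token spans $c_{j+1}$ because $f$ comes first. At $f$ the serial scan finds no positive-length accepting prefix in $w_{[f,|w|)}$. The chunk scan reads only a prefix of that suffix, so it finds none either, and it halts at $f$. The case $f = c_j$ gives a halt at the chunk's first byte.

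The step that needs care is the asymmetry at $f$. Truncating the input can only remove accepting positions, never add them, so failure in the serial scan implies failure in the chunk. The converse does not hold, and the Corollary does not need it. Concatenating the results, the serial stream is exactly the contribution of chunks $0, \dots, j$, hence a prefix of the full concatenation. The chunks after $j$ start at offsets the serial scan never reaches, and nothing constrains what they emit; the UTF-8-free example of any valid token after a garbage byte shows they may emit tokens. Because chunk $j$ reports a consumed length shorter than its own length, checking every chunk's consumed length detects the failure, which justifies the final sentence of the Corollary.
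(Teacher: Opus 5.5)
Your proof is correct and follows the paper's own argument. Lemma~\ref{lem:prefix} makes every cut below $\mathrm{con}(w)$ a serial token boundary, the reasoning of Theorem~\ref{thm:main} handles the chunks before $c_j$, and the failing chunk halts at $\mathrm{con}(w)$, which you justify by noting that truncating the suffix can only remove accepting prefixes, a step the corollary asserts without comment.
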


\begin{proposition}[Vacuous certificates]
\label{prop:vacuous}
If no $q \in Q^+$ has $\delta^+(q,b)$ defined, then $b$ is certified, and no completely tokenizable input contains
$b$.
\end{proposition}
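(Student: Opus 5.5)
The proposition has two claims, and I will prove them separately.

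The first claim, that $b$ is certified, follows straight from Definition~\ref{def:certified}. That definition has two clauses. The first is a universally quantified condition over states $q \in Q^+$ with $\delta^+(q,b)$ defined. The second is an implication whose hypothesis is that $\delta^+(q_0,b)$ is defined. Under the proposition's hypothesis there is no $q \in Q^+$ with $\delta^+(q,b)$ defined. So the first clause ranges over an empty set, and the hypothesis of the second clause is false because $q_0 \in Q^+$, which we assume throughout. Both clauses therefore hold vacuously.

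For the second claim, suppose $w$ is completely tokenizable and $w_i = b$ for some offset $i$. Since $\mathrm{con}(w) = |w| > i$, the offset $i$ lies inside some emitted token. That is, there is a token boundary $s \le i$ such that the token starting at $s$ has length greater than $i - s$, so byte $i$ is consumed on the way to that token's accepting position. I will not need Lemma~\ref{lem:boundary} to decide whether $i$ is itself a boundary: whether $s = i$ or $s < i$, the byte $b$ is read somewhere along the match path of an emitted token.

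Let $q = (\delta^+)^*(q_0, w_s\cdots w_{i-1})$. This is well defined by the observation in Section~\ref{sec:certificate} that the match path of any emitted token lies in $A^+$ throughout. The transition on $b$ from $q$ leads toward that token's accepting position, so its target is co-accessible. Its target is also reachable, so it lies in $Q^+$, and hence $\delta^+(q,b)$ is defined. This contradicts the hypothesis, so no completely tokenizable input contains $b$.

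The only step needing care is justifying that $\delta^+(q,b)$, and not merely $\delta(q,b)$, is defined. The match path consumes $b$ and only afterwards reaches an accepting position at or after offset $i+1$. This requires that tokens are positive-length and that the accepting position recorded for the token lies beyond $i$, both of which hold by the choice of the token containing offset $i$. Everything else is vacuous-truth bookkeeping.
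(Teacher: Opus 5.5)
Your proposal is correct and follows the paper's own argument: the first claim holds because both clauses of Definition~\ref{def:certified} are vacuous, and the second because an occurrence of $b$ would be consumed along an emitted token's match path, which lies in $A^+$, giving a state of $Q^+$ with $\delta^+(q,b)$ defined. You only make explicit what the paper compresses into one line, namely the case $s = i$ and why the target of the $b$-transition is co-accessible.
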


\begin{proof}
The first claim holds because Definition~\ref{def:certified} quantifies over an empty set. For the second, a
completely tokenizable $w$ containing $b$ would have $b$ consumed by some emitted token, whose match path lies in
$A^+$, so some state of $Q^+$ would consume $b$ live.
\end{proof}

Such symbols are certified but useless for planning: they cannot appear in valid input. They are also separable
mechanically, with no further analysis. If $b$ is certified then every live state consuming it is $q_0$, so $b$ is
consumed live by some state exactly when $\delta^+(q_0, b)$ is defined:

\begin{corollary}[Useful certificates]
\label{cor:useful}
A certified $b$ is non-vacuous if and only if $\delta^+(q_0, b)$ is defined.
\end{corollary}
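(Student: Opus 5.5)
The corollary asserts that a certified $b$ is non-vacuous exactly when $\delta^+(q_0,b)$ is defined. Here "non-vacuous" means that some state of $Q^+$ consumes $b$ live, the negation of the hypothesis of Proposition~\ref{prop:vacuous}. I would prove the two directions separately, each directly from Definition~\ref{def:certified}.

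For the backward direction, assume $\delta^+(q_0,b)$ is defined. The standing assumption $q_0 \in Q^+$ then makes $q_0$ itself a state of $Q^+$ consuming $b$ live, so the hypothesis of Proposition~\ref{prop:vacuous} fails and $b$ is non-vacuous. Certification is not even needed here.

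For the forward direction, assume $b$ is certified and non-vacuous, and pick $q \in Q^+$ with $\delta^+(q,b)$ defined. Definition~\ref{def:certified} forces $q = q_0$, so $\delta^+(q_0,b)$ is defined.

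If "non-vacuous" is read instead as "occurs in some completely tokenizable input", I would add a bridge. Proposition~\ref{prop:vacuous} already gives one implication: an occurrence of $b$ in such an input is consumed by a match path lying in $A^+$, so some live state consumes $b$. For the converse I would construct a witness. Since $\delta^+(q_0,b) = p \in Q^+$, $p$ is co-accessible, and the paper's remark on live paths gives $v$ with $(\delta^+)^*(p,v)$ accepting. Then $bv$ is accepted from $q_0$ with positive length.

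The main obstacle is that a single accepted word need not be completely tokenizable: longest match might accept a longer prefix of $bv$ and then fail on the remainder. I would handle this by choosing the witness more carefully. Take a shortest accepting positive-length word beginning with $b$; with this choice no proper positive-length prefix of it is accepted. A longer prefix cannot be accepted either, because there are no bytes past the end of the word. Hence the scan of that word from $q_0$ emits exactly one token covering the whole input. This witness only exists if the second reading is the intended one; under the first reading the two-line argument above suffices.
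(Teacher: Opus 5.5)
Your main argument is correct and is the paper's own: certification forces every live state consuming $b$ to be $q_0$, so some live state consumes $b$ exactly when $\delta^+(q_0,b)$ is defined. Your bridge for the second reading is also sound, but the minimality choice is unnecessary and the obstacle you name does not arise. Any word $bv$ accepted from $q_0$ is already completely tokenizable as a single token: the scan runs to the end of the input, and longest match takes that final accepting position, exactly as in the witness of Theorem~\ref{thm:necessity}.
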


An implementation can therefore report the useful set with no extra analysis, by intersecting the certificate with the
initial state's live transitions, which is what the predicate in the accompanying library reports: vacuous
certificates never reach a caller, since a caller cannot act on one and a planner searching for one scans the whole
input for nothing.

\subsection{The condition is also necessary}
\label{sec:necessity}

Definition~\ref{def:certified} is stated as a test, and Lemma~\ref{lem:boundary} shows it is sufficient. It is also
necessary, so nothing is given away by using it. It needs no side hypothesis beyond the standing assumption that the
token set accepts some word ($q_0 \in Q^+$): the states of $Q^+$ are reachable and co-accessible by construction, which
is exactly what the witness below asks of them, so the theorem applies to every such token set whatever automaton it was
compiled from, and a token set accepting nothing certifies no useful symbol in the first place.

\begin{theorem}[Characterization]
\label{thm:necessity}
Assume the standing $q_0 \in Q^+$. Then $b$ is certified if and only if every occurrence of $b$, in every completely
tokenizable input, begins a token.
\end{theorem}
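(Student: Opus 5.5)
The forward direction is Lemma~\ref{lem:boundary}. If $b$ is certified and $w$ is completely tokenizable, every offset holding $b$ is a token boundary of $w$. Such an offset is strictly below $|w|$, so the token emitted there begins at it. All the work is in the converse. I will prove it contrapositively: assuming $b$ is not certified, I will build one completely tokenizable input that contains an occurrence of $b$ strictly inside an emitted token.

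First I unify the two ways Definition~\ref{def:certified} can fail into a single witness: a nonempty $u \in \Sigma^+$ with $q = (\delta^+)^*(q_0, u)$ and $\delta^+(q, b)$ defined.
\begin{itemize}
\item If some $q \in Q^+$ with $q \neq q_0$ has $\delta^+(q,b)$ defined, take $u$ to be a live path to $q$. The remark after the definition of $A^+$ lets me read reachability inside $A^+$, and $u$ is nonempty because $q \neq q_0$.
\item Otherwise $\delta^+(q_0,b)$ is defined and $q_0$ is re-entrant. Definition~\ref{def:reentrant} then supplies a nonempty $u$ with $(\delta^+)^*(q_0,u) = q_0$, and I take $q = q_0$.
\end{itemize}
In both cases $q' = \delta^+(q,b)$ lies in $Q^+$, so it is co-accessible. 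I fix a $v$ with $\delta^*(q',v)$ accepting, where $v$ may be empty, and set $w = ubv$.

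The key step is to show that the scanner on $w$ emits exactly one token, of length $|w|$. Starting at offset $0$ in $q_0$, the path $q_0 \xrightarrow{u} q \xrightarrow{b} q' \xrightarrow{v}$ is defined on every byte of $w$. The longest-match loop therefore consumes all of $w$ without an undefined transition stopping it early. The last position it reaches, $|w|$, is accepting, and it is eligible under the positive-length rule because $|w| \geq |u| + 1 > 0$. Longest match takes the last accepting position recorded, which here can only be $|w|$, since no byte lies beyond it. So $\mathrm{tok}(w)$ is a single pair of length $|w|$ and $\mathrm{con}(w) = |w|$, and $w$ is completely tokenizable. Its token boundaries are exactly $0$ and $|w|$. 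The offset $|u|$ holds $b$ and satisfies $0 < |u| < |w|$, so this occurrence of $b$ does not begin a token, which establishes the contrapositive.

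I expect the main obstacle to be this middle step: checking that maximal munch cannot split $ubv$ into several tokens. The worry is that the scan might stop at an earlier accepting prefix and restart, which could land a boundary exactly at $|u|$. This is ruled out because the scan commits only after transitions fail, and here none fails before the end of the input. A secondary point worth stating explicitly is why the construction needs the live restriction. Both $q$ and $q'$ must lie in $Q^+$, $q$ so that $u$ exists and $q'$ so that $v$ exists. This is exactly what $\delta^+$ guarantees, and it is why the theorem needs no hypothesis beyond $q_0 \in Q^+$.
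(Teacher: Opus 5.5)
Your proposal is correct and follows essentially the paper's proof. Sufficiency comes from Lemma~\ref{lem:boundary}, and for necessity you build the witness $w = ubv$ from a nonempty live path $u$ to the offending state and a co-accessibility continuation $v$, which longest match emits as a single token spanning $w$; the only difference is that you merge the two rejection cases into one witness up front, where the paper treats them separately and reuses the same $w$.
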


\begin{proof}
Sufficiency is Lemma~\ref{lem:boundary}. For necessity we show that a rejected $b$ admits a witness: a completely
tokenizable input in which an occurrence of $b$ is not a token boundary. Definition~\ref{def:certified} rejects $b$ in
exactly two ways.

Suppose some $q \neq q_0$ has $\delta^+(q,b)$ defined. Then $q \in Q^+$ is reachable, so $(\delta^+)^*(q_0,u) = q$ for
some $u$, and $u$ is nonempty because $q \neq q_0$. The target $\delta^+(q,b)$ lies in $Q^+$ by the definition of
$\delta^+$, so $(\delta^+)^*(\delta^+(q,b), v)$ is accepting for some $v$. Take $w = ubv$. Reading $w$ from $q_0$
traverses $u$ to $q$, then $b$, then $v$ to an accepting state, so the scan from offset $0$ reaches an accepting
position at $|w|$, and no later one exists because the input ends there. Longest match therefore emits a single token
spanning $w$, so $w$ is completely tokenizable, and the occurrence of $b$ at offset $|u| \geq 1$ begins no token.

Otherwise $\delta^+(q_0,b)$ is defined and $q_0$ is re-entrant. Re-entrancy gives a nonempty $u$ with
$(\delta^+)^*(q_0,u) = q_0$, and co-accessibility of $\delta^+(q_0,b)$ gives $v$ as before. The same $w = ubv$ is
completely tokenizable and places $b$ at offset $|u| \geq 1$ inside its only token.
\end{proof}

Restricting to $A^+$ is what makes this work, and the restriction is not cosmetic. A pattern denoting the empty
language leaves reachable states behind from which no accepting state can be reached. Taking $T_1 =
\code{ab}\cdot\emptyset$ alongside $T_2 = \code{b}$, the compiled automaton has $q_0 \xrightarrow{a} q_2
\xrightarrow{b} q_1$ beside $q_0 \xrightarrow{b} q_3$ accepting, and $q_1, q_2$ are reachable but not co-accessible.
Had Definition~\ref{def:certified} been stated over $\delta$ rather than $\delta^+$, it would reject $b$, because the
noninitial $q_2$ consumes it; yet every completely tokenizable input here is a sequence of \code{b} tokens in which
every $b$ begins one, so no witness exists and necessity would fail. Over $\delta^+$ the offending transition is
invisible, since $q_1 \notin Q^+$, and $b$ certifies as it should. Minimization does not remove such chains, because
the minimizer used here treats a missing transition as distinct from one into a state that cannot accept, which is
what a scanner needs: the two differ in how far a longest match reads before failing. That is deliberately weaker than
Myhill-Nerode minimality, under which every state with an empty right language is equivalent.

Reachability is the mirror image, and it matters for automata not produced by subset construction. Give the same $T_2
= \code{b}$ a detached $q_4 \xrightarrow{b} q_5$ with $q_5$ accepting. Now $q_4$ is co-accessible and consumes $b$, so
a definition quantifying over co-accessible states alone would reject $b$; yet no input reaches $q_4$, every
completely tokenizable input is again a sequence of \code{b} tokens, and no witness exists. Requiring both halves in
$Q^+$ keeps the definition exactly as permissive as the scans it describes.

The derivation therefore computes $Q^+$ first, in two sweeps. One forward sweep from $q_0$ marks the reachable states
and one reverse sweep marks those from which an accepting state remains reachable; the certificate sweep then skips
every unmarked source and every transition whose target is unmarked, since neither can lie on an emitted token and so
neither carries information about where tokens may begin. Because both marks are computed rather than assumed,
Theorem~\ref{thm:necessity} applies to a hand-built automaton as much as to a compiled one. Each sweep visits each
table entry once and leaves the $O(|Q|\,|\Sigma|)$ bound of Section~\ref{sec:derivation} unchanged.

The same restriction is what makes Corollary~\ref{cor:useful} correct. In the example $\delta(q_0, a)$ is defined but
$\delta^+(q_0, a)$ is not, since the target $q_2$ is not co-accessible. A test phrased over $\delta$ would therefore
call $a$ a useful certificate, yet no input this token set accepts contains an $a$ at all; phrased over $\delta^+$ it
reports $a$ as vacuous, which is what a caller needs.

Consequently, for a token set compiled this way, ``no useful byte certifies'' is not a failure of the test: it means no
single byte can serve as a split symbol at all under the semantics of Section~\ref{sec:prelim}. The applicability
results of Section~\ref{sec:applicability} inherit that strength, and the negative rows there are statements about the
tokenizations, not about the analysis.

\subsection{The subtlety}
\label{sec:subtlety}

Dropping the re-entrancy condition from Definition~\ref{def:certified} does not merely weaken the result, it makes it
false. Consider the token set consisting of the single nullable pattern $a^*$. Minimization yields an automaton whose
initial state is accepting and carries a self-loop on $a$; the only state consuming $a$ is $q_0$, so the naive
certificate admits $a$. Splitting the input $aa$ between its two bytes then yields two tokens where the serial scan,
by longest match, yields one. The condition is therefore not a refinement for completeness but a soundness
requirement.

Figure~\ref{fig:reentrancy} shows why the condition is stated as an incoming transition rather than as a self-loop. In
the nullable case the initial state re-enters itself directly, which a self-loop test would catch. In the cyclic
variant $(ab)^*c$ it re-enters through the cycle $a, b$, and no self-loop exists anywhere; the only state consuming
$c$ is $q_0$, so a self-loop test would certify $c$ and split $abc$ in the middle of its only token. Both cases are
carried as regression tests.

\begin{figure}[t]
\centering
\begin{minipage}[b]{0.3\linewidth}
\centering
\includegraphics[width=\linewidth]{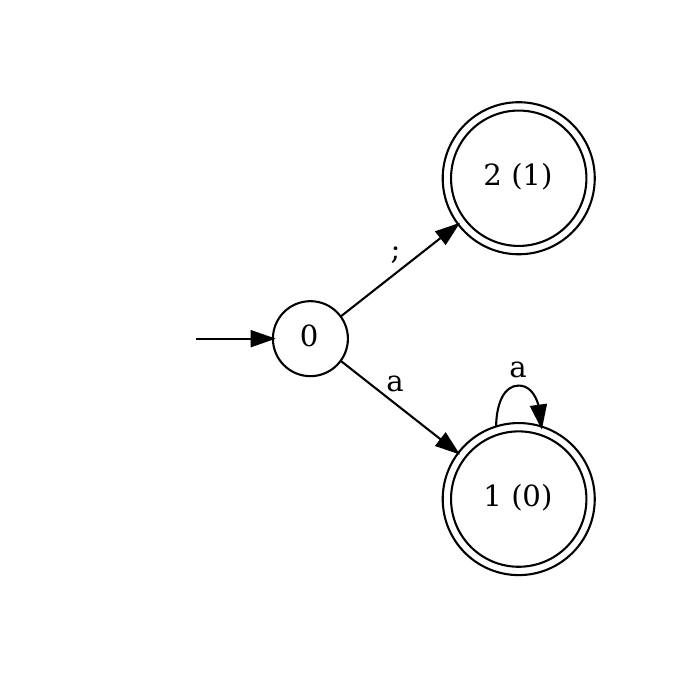}\\[0.5em]
(a) $a^+$ and \code{;}
\end{minipage}
\hfill
\begin{minipage}[b]{0.24\linewidth}
\centering
\includegraphics[width=\linewidth]{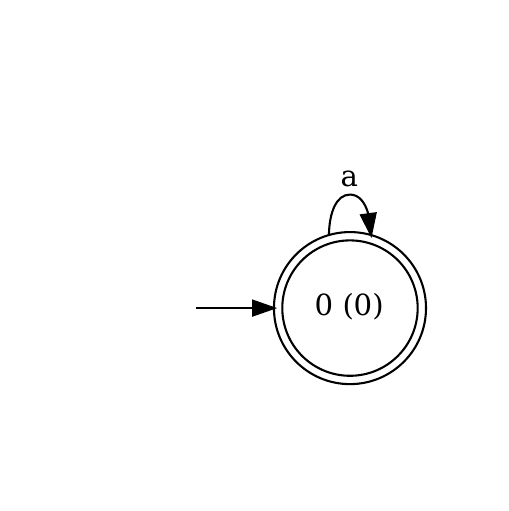}\\[0.5em]
(b) $a^*$
\end{minipage}
\hfill
\begin{minipage}[b]{0.3\linewidth}
\centering
\includegraphics[width=\linewidth]{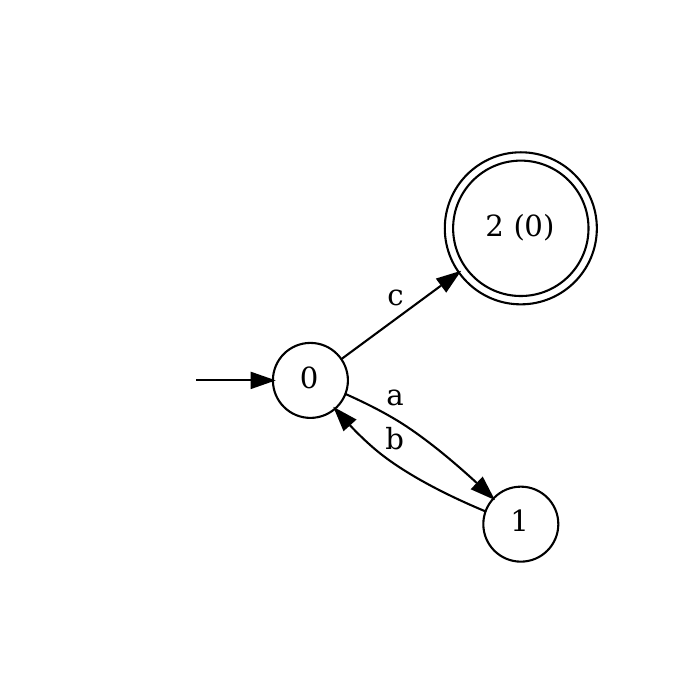}\\[0.5em]
(c) $(ab)^*c$
\end{minipage}
\caption{Minimized automata drawn by the library described here; double circles are accepting states,
labelled with the state number and the token identifier. In (a) nothing enters state 0, so \code{;} is certified: no
other state consumes it. In (b) state 0 accepts and re-enters itself on $a$, and in (c) it is re-entered through the
cycle $a, b$. In both counterexamples the only state consuming the candidate byte is the initial one, so dropping the
re-entrancy condition would certify it and break longest match. The shipped predicate rejects the candidate in (b) and
(c) and accepts \code{;} in (a).}
\label{fig:reentrancy}
\end{figure}

\section{Certification modulo discarded tokens}
\label{sec:modulo}

Fix a set $I \subseteq T$ of \emph{discarded} tokens, chosen by the caller. Write $\pi_I$ for the map on token sequences
deleting every pair whose token lies in $I$, and call two sequences \emph{$I$-equivalent} when $\pi_I$ sends them to the
same sequence. Comparing streams modulo a designated token class is not itself new: ZipLex's separator construction
proves its printing guarantee in exactly this form, re-lexing equal up to inserted separator
tokens~\cite{chassot2026ziplex}. What is asked here is different: which symbols of raw input may be cut at while
preserving $I$-equivalence, where the cut can land inside an existing discarded token rather than at a seam between
formed tokens.

Being discarded is not by itself sufficient. Cutting the line comment \code{//ab} between its last two bytes leaves
\code{//a} on the left, which is a comment and is discarded, and \code{b} on the right, which is an identifier and is
not. What is required is that both halves of the severed token are discarded, and that the restarted scan rejoins the
scan it interrupted. Write
\[
T(q) = \{\, \tau(p) \mid p = (\delta^+)^*(q, v) \text{ for some } v \in \Sigma^*,\ \tau(p) \text{ defined} \,\}
\]
for the tokens still reachable from $q$, noting $\tau(q) \in T(q)$ whenever $\tau(q)$ is defined.

\begin{definition}[Certified modulo $I$]
\label{def:trivia-certified}
A symbol $b \in \Sigma$ is \emph{certified modulo $I$} if every $q \in Q^+$ with $\delta^+(q, b)$ defined satisfies
$A(q)$ or $C(q)$, where
\[
A(q) \;\equiv\; q = q_0 \text{ and } q_0 \text{ is not re-entrant},
\]
and $C(q)$ holds when all three of
\begin{enumerate}
  \item $\tau(q)$ is defined,
  \item $\delta^+(q_0, b)$ is defined and equal to $\delta^+(q, b)$, and
  \item $T(q) \subseteq I$
\end{enumerate}
are satisfied.
\end{definition}

The two alternatives are not symmetric, and the asymmetry matters. $A(q)$ is Definition~\ref{def:certified}: the
initial state may consume $b$ only while nothing returns to it mid-scan. $C(q)$ is the relaxation, and it is available
to \emph{every} consuming state including a re-entered $q_0$. Requiring $A$ of $q_0$ unconditionally would be strictly
weaker than necessary: for a discarded token \code{a*} the minimized automaton is one accepting, self-looping state,
so $q_0$ is re-entrant and $A$ fails, yet $C(q_0)$ holds and cutting between two \code{a}s replaces one discarded
token by two, which $\pi_I$ deletes either way.

Setting $I = \emptyset$ recovers Definition~\ref{def:certified} exactly, since $\tau(q) \in T(q)$ makes clause 3
unsatisfiable and only $A$ survives; so every certified symbol is certified modulo $I$ for every $I$. The three
conditions of $C$ do distinct work. That $\tau(q)$ is defined lets the left chunk stop at the cut with a complete token
rather than mid-token. That $T(q) \subseteq I$ makes that token and whatever the severed remainder becomes both
discardable, and since $\tau(q) \in T(q)$ it also forces the left chunk's token to be discarded. The equality
$\delta^+(q, b) = \delta^+(q_0, b)$ is the load-bearing condition: after consuming $b$ the restarted scan occupies the
same state as the scan it interrupted, so the two agree on every subsequent byte and the disturbance is confined to the
single token containing the cut. Without it the definition would be a statement about language inclusion rather than a
test on the transition table. Such a statement may still be decidable, but deciding it would need a construction over an
augmented scanner semantics, one carrying lookahead and token output as state composition for maximal munch must, rather
than the comparison of two table rows used here.

\begin{theorem}[Split invariance modulo $I$]
\label{thm:trivia}
Let $w$ be completely tokenizable and let $0 = c_0 < c_1 < \cdots < c_m = |w|$ be offsets such that each interior
$c_j$ holds a symbol certified modulo $I$. Then
\[
\pi_I(\mathrm{tok}(w)) \;=\; \pi_I\!\left(\mathrm{tok}(w_{[c_0,c_1)}) \cdot \mathrm{tok}(w_{[c_1,c_2)}) \cdots
\mathrm{tok}(w_{[c_{m-1},c_m)})\right),
\]
where $\cdot$ denotes concatenation of token sequences and $w_{[a,b)}$ the corresponding slice.
\end{theorem}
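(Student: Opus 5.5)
The plan is to reduce the theorem to a single cut and then induct on $m$. The single-cut claim is this: for completely tokenizable $w$ and one interior offset $c$ holding a symbol $b$ certified modulo $I$, the suffix $w_{[c,|w|)}$ is again completely tokenizable, and
\[
\pi_I(\mathrm{tok}(w)) = \pi_I\bigl(\mathrm{tok}(w_{[0,c)})\bigr)\cdot \pi_I\bigl(\mathrm{tok}(w_{[c,|w|)})\bigr).
\]
Given this, apply it at $c_1$. Then apply the same claim to the completely tokenizable suffix $w_{[c_1,|w|)}$ with the remaining offsets shifted by $c_1$, and so on. Since $\pi_I$ is a homomorphism for concatenation, the theorem follows by induction on $m$.

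For the single cut I would look at the serial scan of $w$ and split into two cases, according to whether $c$ is a token boundary of $w$.

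\textbf{Case 1: $c$ is a boundary.} I would reuse the proof of Theorem~\ref{thm:main} essentially verbatim. Lemma~\ref{lem:truncation} with $v = w_{[0,c)}$ shows that the left chunk reproduces every serial token on $[0,c)$ and is fully consumed. The right chunk starts at $c$ in $q_0$, exactly as the serial scan does, and reads the same suffix to the end of the input. So the two sides agree even before $\pi_I$ is applied. Here I only need that $c$ is a boundary, not why it is one. In particular I do not need the lookahead fact from Theorem~\ref{thm:main}, because the right chunk extends to $|w|$.

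\textbf{Case 2: $c$ lies strictly inside a serial token spanning $[s,e)$, with $s<c<e$.} Let $q = (\delta^+)^*(q_0, w_{[s,c)})$. The match path passes through $c$, so $\delta^+(q,b)$ is defined. As in Lemma~\ref{lem:boundary}, $A(q)$ would force $q=q_0$ with $w_{[s,c)}$ nonempty, making $q_0$ re-entrant, so $A(q)$ fails. Hence $C(q)$ holds. I would then verify three facts.

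\emph{The left chunk.} By Lemma~\ref{lem:truncation} at boundary $s$, the left chunk emits the serial tokens on $[0,s)$ and arrives at $s$ in $q_0$. From $s$ it reads to the chunk's end at $c$ and lands in $q$, which is accepting by clause~1 of $C(q)$. Longest match therefore emits a single token on $[s,c)$ of kind $\tau(q)$, which lies in $T(q)\subseteq I$ by clause~3. So the left chunk is fully consumed and ends with a discarded token.

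\emph{The right chunk.} It starts at $c$ in $q_0$ and, after reading $b$, is in $\delta^+(q_0,b)=\delta^+(q,b)$ by clause~2. This is the very state the serial scan occupies after reading $b$. From then on the two runs read the same bytes from the same state, so their accepting positions beyond $c$ coincide. The serial token's last accepting position is $e>c$, so the right chunk's first token ends exactly at $e$. Its kind is $\tau$ of the state at $e$, which lies in $T(q)\subseteq I$. The right chunk then restarts at $e$ in $q_0$, as the serial scan does, and from there the two agree to the end; in particular the suffix is completely tokenizable.

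\emph{The serial token.} Its own kind is $\tau$ of the same state at $e$, so it is also in $I$.

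Putting these together, the serial stream and the two-chunk stream differ only in replacing one discarded token by two discarded tokens, and $\pi_I$ erases all three.

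I expect the main obstacle to be the right chunk's first token in Case~2: showing that it ends exactly at $e$ rather than at some earlier accepting position, and that no accepting position of the serial run on $(c,e]$ is missed. This is where the equality of clause~2 carries the argument. I would spell it out as a lockstep statement: for every $k$ with $c<k\le |w|$, the state of the serial run at $k$ and the state of the right-chunk run at $k$ are either both undefined or equal. From this, the sets of recorded accepting positions beyond $c$ coincide. Because the serial run records no accepting position beyond $e$, the right chunk's maximum is also $e$.

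A lesser point to state explicitly is that Case~2 never needs $c$ to be a boundary of anything. It only needs $s$ to be a boundary, which is what Lemma~\ref{lem:truncation} requires.
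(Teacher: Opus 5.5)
Your proposal is correct and follows the paper's proof essentially step for step. You reduce to a single cut by induction on $m$; you recurse on the suffix after $c_1$, where the paper applies the remaining cuts to both completely tokenizable slices. You then use the same boundary/non-boundary case split, Lemma~\ref{lem:truncation} for the left chunk, and clauses 1--3 of $C(q)$ exactly as the paper does, and your explicit lockstep statement for the right chunk's first token is a more detailed rendering of the paper's ``identical bytes from identical states'' step, not a different argument.
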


\begin{proof} For empty $w$ the partition degenerates as in Theorem~\ref{thm:main} and both images are empty; assume
$|w| > 0$. By induction on $m$ it suffices to treat one interior cut at $c$ holding $b$. The induction is well founded
because the one-cut argument below shows each chunk consumes its whole slice, so both slices are themselves completely
tokenizable and the remaining cuts may be applied to them in turn.

Suppose first that $c$ is a token boundary of $w$. The left slice $w_{[0,c)}$ emits the same tokens as $w$ on $[0,c)$
by Lemma~\ref{lem:truncation} taken with $s = c$, which is what rules out a last token whose failed lookahead crossed
$c$ in $w$. The scanner enters every token in $q_0$ and carries no state across boundaries, so the chunk beginning at
$c$ starts in the same state as the scan of $w$ does there and reads the same bytes. The token sequences on the two
sides are therefore equal before $\pi_I$ is applied, and $\pi_I$ preserves equality.

Otherwise $c$ lies strictly inside an emitted token spanning $[s, e)$ and carrying token $t$. Let $u = w_s \cdots
w_{c-1}$, which is nonempty, and let $q = (\delta^+)^*(q_0, u)$ be the state immediately before $b$ is consumed. The
token's match path continues through $c$, so $\delta^+(q, b)$ is defined and $q$ satisfies $A(q)$ or $C(q)$. It cannot
satisfy $A(q)$: that would need $q = q_0$ with $q_0$ not re-entrant, yet $(\delta^+)^*(q_0, u) = q_0$ with $u$
nonempty is precisely re-entrancy. So $C(q)$ holds, and it holds whether or not $q$ happens to be a re-entered $q_0$;
that is the only place the generalization over Definition~\ref{def:certified} is used.

Consider the chunk ending at $c$, which is the prefix $w_{[0,c)}$ with $c > s$. By Lemma~\ref{lem:truncation} it emits
the same tokens as $w$ on $[0, s)$ and reaches $s$ in $q_0$; that is what rules out an earlier token whose failed
lookahead ran past $s$ in $w$ and is truncated here. The chunk then consumes $u$ and arrives at $q$ with the chunk
exhausted. Since $\tau(q)$ is defined by condition 1, the last accepting position recorded is $c$ itself, so the chunk
emits the single token $\tau(q)$ spanning $[s, c)$ and consumes to its end.

Consider the chunk beginning at $c$. It starts in $q_0$ and consumes $b$, reaching $\delta^+(q_0, b)$, which equals
$\delta^+(q, b)$ by condition 2, the state the scan of $w$ occupies after consuming the same byte at the same offset.
From there both runs read identical bytes from identical states, so they record identical accepting positions, and the
chunk emits $t$ spanning $[c, e)$. Both runs then stand at $e$ in $q_0$, and $e$ is a boundary of $w$, so the
remainders agree.

The two sides therefore differ only in that $t$ on $[s, e)$ is replaced by $\tau(q)$ on $[s, c)$ followed by $t$ on
$[c, e)$. Both $\tau(q)$ and $t$ lie in $T(q)$, which condition 3 places inside $I$, so $\pi_I$ deletes all three
occurrences and the images coincide.
\end{proof}

Corollary~\ref{cor:malformed} has no analogue here. For input the serial scan does not tokenize completely, no
$\pi_I$-prefix relation is claimed, and a caller must still check every chunk's consumed length before trusting the
concatenation at all.

\begin{proposition}[Useful certificates modulo $I$]
\label{prop:useful}
Proposition~\ref{prop:vacuous} and Corollary~\ref{cor:useful} carry over: a symbol no live state consumes is certified
modulo $I$ for every $I$ and occurs in no completely tokenizable input, and a symbol certified modulo $I$ is
non-vacuous exactly when $\delta^+(q_0, b)$ is defined.
\end{proposition}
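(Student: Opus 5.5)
The plan is to split Proposition~\ref{prop:useful} into three claims and handle each the way Proposition~\ref{prop:vacuous} and Corollary~\ref{cor:useful} were handled.

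\emph{Vacuous symbols are certified modulo $I$.} Definition~\ref{def:trivia-certified} quantifies over the states $q \in Q^+$ with $\delta^+(q,b)$ defined. If no live state consumes $b$, that set is empty, so $b$ is certified modulo $I$ for every $I$.

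\emph{Vacuous symbols occur in no completely tokenizable input.} This part does not mention $I$. It is exactly the second half of Proposition~\ref{prop:vacuous}: an occurrence of $b$ in a completely tokenizable $w$ is consumed by some emitted token. That token's match path lies in $A^+$, so some state of $Q^+$ consumes $b$ live, which is a contradiction. I would cite that argument rather than repeat it.

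\emph{Non-vacuity characterization.} I take ``non-vacuous'' to mean that some $q \in Q^+$ has $\delta^+(q,b)$ defined, as in Corollary~\ref{cor:useful}.
\begin{itemize}
\item If $\delta^+(q_0,b)$ is defined, then $q_0 \in Q^+$ is itself a live consumer, so $b$ is non-vacuous.
\item Conversely, suppose $b$ is certified modulo $I$ and some $q \in Q^+$ has $\delta^+(q,b)$ defined. Then $A(q)$ or $C(q)$ holds. If $A(q)$ holds, then $q = q_0$ and $\delta^+(q_0,b)$ is defined. If $C(q)$ holds, clause~2 states outright that $\delta^+(q_0,b)$ is defined.
\end{itemize}
So, for a symbol certified modulo $I$, $b$ is non-vacuous exactly when $\delta^+(q_0,b)$ is defined.

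The step that needs the most care is this last case analysis. In the exact setting, Corollary~\ref{cor:useful} used the fact that every consumer of a certified symbol is $q_0$. Under Definition~\ref{def:trivia-certified} that fact fails, because states $q \neq q_0$ satisfying $C(q)$ may consume $b$. The argument therefore has to rest on clause~2 of $C$ instead, which guarantees $\delta^+(q_0,b)$ is defined whenever such a $q$ exists. With that, the same one-bit test against the initial row of the table still separates useful from vacuous certificates.
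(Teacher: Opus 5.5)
Your proposal is correct and follows the paper's own proof. The first claim is an empty quantification, the second reuses the match-path argument of Proposition~\ref{prop:vacuous}, and the third is a case split on $A(q)$ versus $C(q)$ in which clause~2 of $C$ supplies that $\delta^+(q_0,b)$ is defined; the paper leaves the easy direction (that $q_0$ is then itself a live consumer) implicit, and you spell it out.
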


\begin{proof}
The first two claims are those of Proposition~\ref{prop:vacuous}, whose argument does not depend on which of the two
conditions admitted $b$: Definition~\ref{def:trivia-certified} likewise quantifies over an empty set. For the third,
every state consuming $b$ live satisfies $A$, hence is $q_0$, or satisfies $C$, whose clause 2 gives $\delta^+(q, b) =
\delta^+(q_0, b)$; either way some live state consumes $b$ exactly when $\delta^+(q_0, b)$ is defined.
\end{proof}

As in the exact setting, such symbols are certified but useless: a caller cannot act on one, and a planner searching
for one scans the whole input for nothing. Both the library predicate and every table below therefore report the
\emph{useful} set, intersecting the certificate with the initial state's live transitions. This matters for reading
Table~\ref{tab:applicability}: the JSON row omits the raw control bytes that no token admits, which the definitions
certify vacuously, and reports only the bytes an input can actually contain.

\begin{proposition}[Strict extension]
\label{prop:extends}
Every certified symbol is certified modulo $I$ for every $I \subseteq T$, and there are token sets and sets $I$ for
which the converse fails.
\end{proposition}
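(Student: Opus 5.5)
The inclusion follows directly from the definitions. Suppose $b$ is certified in the sense of Definition~\ref{def:certified}, and take any $q \in Q^+$ with $\delta^+(q,b)$ defined. Definition~\ref{def:certified} gives $q = q_0$. It then also gives that $q_0$ is not re-entrant, because $\delta^+(q_0,b) = \delta^+(q,b)$ is defined. This is exactly $A(q)$, so $b$ meets Definition~\ref{def:trivia-certified} for every $I$ without using clause $C$. The paper has already noted this when remarking that $I=\emptyset$ recovers the exact definition. I would restate it as the first half of the proof rather than lean on that remark.

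For strictness I would give one small explicit witness, chosen so that the failure of the exact certificate and the success of $C$ can both be read off a hand-drawn minimized automaton. The candidate is the token set with a discarded token $T_1 = \code{a}^+$ and $I = \{T_1\}$. Its minimized automaton is $q_0 \xrightarrow{a} q_1$ with $q_1 \xrightarrow{a} q_1$ and $\tau(q_1) = T_1$. Both states lie in $Q^+$, and $q_0$ is not re-entrant because nothing enters it. Then $a$ is not certified exactly, since the non-initial $q_1$ consumes it live. This can be confirmed independently through Theorem~\ref{thm:necessity}, where the input $aa$ is one token with $a$ at offset $1$. Modulo $I$, however, $q_0$ satisfies $A$. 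The state $q_1$ satisfies $C$:
\begin{enumerate}
\item $\tau(q_1)$ is defined;
\item $\delta^+(q_0,a) = q_1 = \delta^+(q_1,a)$;
\item $T(q_1) = \{T_1\} \subseteq I$.
\end{enumerate}
So $a$ is certified modulo $I$, and it is useful by Proposition~\ref{prop:useful}, since $\delta^+(q_0,a)$ is defined.

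The only step needing care is being sure the automaton really has the shape I claim, so that clause 2 holds. The question is whether minimization could merge $q_0$ and $q_1$, or whether the compiled $q_0$ might differ. It cannot merge them, because $q_0$ is non-accepting and $q_1$ is accepting. The equality $\delta^+(q_0,a) = \delta^+(q_1,a)$ holds in this minimal form, and no other live states exist. If I wanted the example to look less degenerate, I would add a non-discarded token such as $\code{;}$. This adds only a transition from $q_0$ to a fresh accepting state and changes neither the analysis of $a$ nor $T(q_1)$. Alternatively, the nullable $\code{a}^*$ example discussed after Definition~\ref{def:trivia-certified} works as well. There $q_0$ is re-entrant, so $a$ fails exactly but satisfies $C(q_0)$ modulo $I$.
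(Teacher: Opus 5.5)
Your proof is correct and takes the paper's approach. The inclusion is obtained the same way: a certified symbol satisfies the alternative $A(q)$ at every live state that consumes it. Strictness is shown with a discarded run token whose continuation state satisfies $C$; you use the minimal instance $a^+$ of the paper's whitespace-run witness, which the paper places alongside identifiers and punctuation, and your suggested kept token \code{;} supplies the non-degenerate case $I \neq T$ that the paper's example has built in.
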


\begin{proof}
The first claim is immediate, since Definition~\ref{def:trivia-certified} offers $q = q_0$ as an alternative and its
re-entrancy proviso is that of Definition~\ref{def:certified}. For the second, take identifiers, punctuation and a
whitespace run over space and newline, with the whitespace token in $I$. Newline is not certified, since the state
inside a whitespace run consumes it into a co-accessible state; it is certified modulo $I$, since that state accepts
the whitespace token, reaches no other token, and advancing from it and from $q_0$ on newline both reach the run
state.
\end{proof}

\section{Deriving and using the certificate}
\label{sec:derivation}

The certificate is derived by a linear-time analysis of the compiled transition table at construction time. First a
reverse pass computes co-accessibility: the table is inverted into a predecessor list, the accepting states seed a
worklist, and each state that reaches one is marked. A forward pass from $q_0$ then marks the reachable states, and
intersecting the two marks yields $Q^+$, the trim subautomaton the results are stated over; subset construction emits
only reachable states, but the analysis is exposed on a simulator that accepts any transition table, and an unreachable
state left by a hand-assembled one would otherwise produce false negatives. One scan then detects whether any reachable
entry targets $q_0$, establishing re-entrancy, and a sweep over the $|\Sigma| \times |Q|$ table records, for each byte,
whether any non-exempt reachable live state consumes it into $Q^+$. The result is a $|\Sigma|$-bit set. Every pass
visits each entry of the logical $|\Sigma| \times |Q|$ table a constant number of times, rereading a shared class
row where the physical table is class-compressed, so the cost stays $O(|Q|\,|\Sigma|)$, with the same bound in
auxiliary space for the predecessor list; it is paid once, at the same asymptotic order as building the table itself,
and the scanner's inner loop is untouched.

Three operations expose the property. A predicate reports whether a byte is a useful certified split symbol, the set
of Corollary~\ref{cor:useful}. A planner divides an input into chunks by sliding each interior boundary forward from
its equal-division target to the next certified byte, and returns a single chunk spanning the whole input when that
useful set is empty. An executor runs the plan, one thread per chunk, and by Theorem~\ref{thm:main} the concatenated
per-chunk streams equal the serial stream for completely tokenizable input, with Corollary~\ref{cor:malformed}
governing the rest. There is no speculation to retry, no overlap to verify, and no merge beyond concatenation.

For $k$ requested chunks the planner performs at most $k-1$ forward searches, so its worst case is $O(kN)$ over an
input of $N$ bytes when certified occurrences are rare; $k$ is normally a small hardware-thread count, and the
searches start at equally spaced offsets. A one-pass planner would reduce the worst case to $O(N + k)$.
Section~\ref{sec:planning} measures the range.

\begin{figure}[t]
\centering
\begin{tikzpicture}[x=4.6mm, y=1mm, font=\small]
  % One cell per input byte, so the token boxes and the input line up exactly.
  % i n t _ x = 4 2 ; \n i f ( x )
  %  0 1 2 3 4 5 6 7 8  9 10 11 12 13 14
  \node[anchor=east] at (-0.3, 13) {serial scan};
  \node[anchor=east] at (-0.3, 5) {input};

  % token extents, as outlined boxes
  \foreach \s/\e in {0/2, 3/3, 4/4, 5/5, 6/7, 8/8, 9/9, 10/11, 12/12, 13/13, 14/14} {
    \draw (\s, 9) rectangle (\e + 1, 17);
  }

  % the input bytes, one per cell
  \foreach \i/\c in {0/i, 1/n, 2/t, 4/x, 5/=, 6/4, 7/2, 8/;, 10/i, 11/f, 12/(, 13/x, 14/)} {
    \node at (\i + 0.5, 5) {\ttfamily \c};
  }

  % the certified byte, and the boundary immediately before it
  \draw[very thick] (9, 1) -- (9, 21);
  \node[anchor=south] at (9, 21.5) {certified byte: newline};
  % the two chunks
  \node[anchor=east] at (-0.3, -4) {chunk 0};
  \draw[|-|] (0, -4) -- (9, -4);
  \node[anchor=east] at (-0.3, -9) {chunk 1};
  \draw[|-|] (9, -9) -- (15, -9);
  \node[anchor=west] at (9.2, -13) {\scriptsize restarts in $q_0$};
\end{tikzpicture}
\caption{A certified byte is a point where the serial scan is provably between tokens: no state reachable mid-token
consumes it into a state that can still accept, so the scan has finished its last token by that offset and restarts
there in $q_0$; failed lookahead may have read past the offset, but recorded no acceptance beyond it. A chunk may
therefore start at that offset with no entry-state uncertainty, and the two chunks' token streams concatenate to the
serial one.}
\label{fig:split}
\end{figure}

\subsection{Deciding the relaxed condition}
\label{sec:decide}

Condition 3 needs no per-state set of tokens. Because $I$ is fixed when the automaton is compiled, $T(q) \subseteq I$
is the complement of ``some kept token is still reachable from $q$'', which one reverse search settles for every state
at once: seed it with the accepting states whose token lies outside $I$ and close backwards along transitions, then
$T(q) \subseteq I$ exactly when $q$ was not reached. Two details keep that linear. The closure walks a reverse index
of the transition table rather than rescanning every symbol and source state for each state it reaches, which would be
quadratic in the state count; and whether a state accepts a discarded token is resolved once per state rather than
searched for from inside the symbol loop. The index needs no space of its own, since determining co-accessibility
already builds one and the relaxed analysis reuses it; it is built once per class row rather than once per symbol
value, so it stays the order of the class-compressed table actually stored rather than of an uncompressed $|Q| \times
|\Sigma|$ one. The closure then costs $O(|Q| + E)$, where $E$ counts the stored table edges, and the per-symbol test
that follows costs $O(|Q|\,|\Sigma|)$. The implementation holds the discarded set in an ordered set, so constructing
it costs $O(|I| \log(1 + |I|))$ and resolving one state's membership $O(\log(1 + |I|))$, for
$O(|Q|\,|\Sigma| + (|Q| + |I|) \log(1 + |I|))$ overall in $O(|Q| + |I|)$ additional space, excluding the predecessor
index the exact analysis already builds and this one reuses.
Propagating the sets $T(q)$ themselves would instead need $O(|Q|\,|T|)$ storage. Testing one symbol then walks the
states as the exact condition does, adding a membership test and one lookup of $\delta^+(q_0, b)$ that is constant
across the walk. The outcome packs into a second 256-bit map, held beside the exact one, so the run-time predicate
remains a single bit test and the search for a chunk boundary is unchanged. The relaxation is therefore free at scan
time. At build time it is additional work rather than a substitute: a backward closure over a reverse index that is
built either way, one membership test per state, and a second sweep of the same $|Q|\,|\Sigma|$ shape as the exact
one, which runs whether or not a caller ever asks the weaker question.

One consequence constrains the interface rather than the theory. The map depends on $I$, so a lexer must be told which
tokens the caller discards when it is built rather than being asked at each call. In practice the discarded set is
fixed by the consuming tool when the lexer is built and does not vary with the input, so this is a mild restriction,
but the relaxed certificate is not a drop-in query on an existing lexer, and two consumers of one grammar that
discard different tokens, a compiler and a formatter say, hold different certificates.

The condition is sound and conservative rather than exact. We tested both directions against an exhaustive oracle: over
400 randomly generated token sets on a three-symbol alphabet, every nonempty string up to length eight that the token
set tokenizes completely was split at every noninitial occurrence of every symbol, the initial cut being tautologically
safe, and the $\pi_I$ images of the spliced and serial streams compared; strings the serial scan leaves unconsumed are
skipped before any comparison. Of the symbol and token set pairs the corpus exercised, 265 were admitted by
Definition~\ref{def:trivia-certified} and none of them failed, and no symbol admitted by Definition~\ref{def:certified}
was ever lost, which is Proposition~\ref{prop:extends} checked mechanically. A further 97 pairs had no counterexample
through length eight yet were rejected, so close to one such pair in four is refused. Those 97 are only pairs for which
no counterexample exists through length eight, so they do not by themselves establish conservatism; raising the bound
from six to eight reduced the count from 99 to 97, which shows the count is bound-sensitive rather than settled. What
does establish conservatism is the explicit witness below. These counts describe one generator and are stated only
because \code{validation.cpp} in the artifact reproduces them exactly; they are seeded and their draw order is pinned,
so they do not vary between compilers.

A proportion from a random sweep describes the generator as much as the condition, so it is worth naming a witness. Take
the discarded tokens \code{ab*} and \code{b+} alongside a kept token \code{c}, over a letter neither uses. An occurrence
of \code{b} sits in one of three places. At the start of a token it can only open a \code{b+}, so the cut falls on a
token boundary and the spliced scan is the serial one. Inside an \code{ab*} token the left piece is a shorter \code{ab*}
and the right piece is a \code{b+}, both discarded. Inside a \code{b+} token the cut leaves two shorter \code{b+}
pieces, discarded as well. So every occurrence of \code{b} is safe modulo the discarded kinds, at every length.
Condition 2 nonetheless fails: advancing on \code{b} from inside \code{ab*} reaches a state accepting \code{ab*},
advancing on \code{b} from the initial state reaches one accepting \code{b+}, and since those accept different tokens
minimization keeps them apart. Insisting that the two scans reconverge at once is what makes the test local, and this is
what it costs: the equality is necessary for this local certificate, not for semantic safety, which the witness retains
without it. The sweep figures and this witness are both asserted by \code{validation.cpp} in the artifact, the witness
in its bounded length-six form; the all-lengths claim rests on the three cases above. Because the automaton is
minimized before the test, the state equality in condition 2 is as tight as the minimizer of Section~\ref{sec:necessity}
makes it, which is tighter than no minimization and weaker than Myhill-Nerode, since that minimizer keeps
empty-right-language states apart. It is not the tightest conceivable test: a quotient treating accepting labels in $I$
as observationally equivalent could merge states this one keeps apart, and the witness above is exactly such a pair.
Both directions are additionally carried as property tests over randomly generated automata in the munch lexer
library~\cite{munch}. Those are a separate and independently parameterised sweep, over a four-symbol alphabet to a
shorter bound. Both sweeps query the shipped predicate rather than a private copy of the rule, which is deliberate: a
test of a reimplementation would justify the reimplementation. What is independent between them is the generator and the
oracle, so their agreement is evidence that the predicate meets the specification on two unrelated families of automata,
not that two codings of the rule agree.

\section{Applicability}
\label{sec:applicability}

The certificate asks the grammar for cooperation, and the interesting question is how much real token sets give.
Table~\ref{tab:applicability} was produced by compiling each token set and reading the predicate for all 256 byte
values. The predicate implements Corollary~\ref{cor:useful}, so it excludes the vacuous certificates of
Proposition~\ref{prop:vacuous} already and the column needs no further filtering; a ``none'' cell therefore asserts the
absence of useful certificates, not of certified bytes, since a byte certified only vacuously, like $a$ in the example
after Theorem~\ref{thm:necessity}, is deliberately not counted. Rows two through four each add a
single token kind to the grammar of row one rather than accumulating, so each collapse is attributable to the token
kind named; the conventional and split-friendly rows recognize exactly the same byte language and differ only in the
tokenization, and the block-comment row directly after them adds one token kind to the split-friendly grammar, so
those three are read together. The program that produces the table is \code{figures/applicability.cpp}, archived with
this report; it links the library and asserts every row, with one qualification for the two rows that restate grammars
living in the benchmark tool. For the scaling grammar the probe compiles the real \code{build\_lexer(false)} beside its
transcription and checks agreement on all 256 certified bits and on the exact token lengths of a corpus, so the exact
certified bitset and the sample token lengths are bound to the shipped grammar, while the modulo cell and the candidate
denominator in the row's cell remain transcription-side. The construction-cost grammar has no such binding in the pinned
v1.2.0 probe, whose comment calls its transcription ``verified by eye only''; later releases bind that row's exact
certified bitset and sample token lengths mechanically too, its candidate denominator and modulo cell likewise remaining
transcription-side.

\begin{table}[!t] \centering \begin{tabular}{>{\raggedright\arraybackslash}p{7.0cm}>{\raggedright\arraybackslash}p{3.6cm}>{\raggedright\arraybackslash}p{3.6cm}}
\toprule
Token set & Useful certified & Useful modulo $I$ \\
\midrule
C-like: identifiers, numbers, whitespace runs, operators, punctuation & all operator and punctuation bytes & the same,
    plus space, tab and newline \\
\addlinespace
the first row plus string literals alone (no raw newline inside) & none & newline \\
\addlinespace
the first row plus \code{//} line comments alone & none & newline \\
\addlinespace
the first row plus block comments alone & none & the same \\
\addlinespace
JSON, the RFC 8259 lexical forms over bytes & none & tab, newline, carriage return \\
\addlinespace
log lines: a run of non-newline bytes, and newline & newline & the same \\
\addlinespace
C-like, conventional tokenization: line-bounded strings and \code{//} comments, whitespace runs including newline & none
    & newline \\
\addlinespace
the same recognized language, split-friendly: newline its own token, spaces and tabs a separate run & newline & the same
    \\
\addlinespace
the same plus block comments & none & the same \\
\addlinespace
the same three kinds, every body barred from the bytes that open another & none & the same \\
\addlinespace
the same three kinds with their ordinary bodies, the block comment alone barred from crossing a line & newline & the
    same \\
\addlinespace
Zig subset, conventional tokenization & none & newline \\
\addlinespace
the Zig subset, split-friendly & newline & the same \\
\addlinespace
\code{keyword\_scale\_builder()}, the construction-cost grammar & 16 of those 24 bytes & the same, plus space, tab and
    newline \\
\addlinespace
\code{build\_lexer(false)}, the scaling grammar & 13 of its own 14 & the same, plus space, tab and newline \\
\bottomrule
\end{tabular}
\caption{Certified bytes by token
set, exactly and modulo the discarded set $I$. Three mechanisms explain the collapses in the middle column:
free-content tokens absorb the alphabet, run tokens de-certify their own bytes, and a multi-byte token de-certifies
the bytes it continues past. Discarding undoes exactly one of them, the second: a severed whitespace run leaves two
whitespace runs and both are deleted, so the run-token obstruction alone is removed; a string, comment or kept token
that also admits the byte still excludes it. The other two survive it: a state inside a string literal or a block
comment accepts nothing, so no cut there ends the left chunk on a complete discarded token, and a kept multi-byte
token keeps its continuation bytes lost, since the token such a cut severs is kept and cannot vanish from the streams.
Both columns are asserted for every row by \code{figures/applicability.cpp}; the exact column is the shipped
predicate, and the relaxed column is the same predicate given the row's discarded set, with four rows additionally
confirmed by splitting a corpus. The two benchmark-grammar rows are asserted against transcriptions whose provenance
the text details.}
\label{tab:applicability}
\end{table}

Three mechanisms explain the collapses. First, \emph{free-content tokens absorb the alphabet}: a string literal that
may contain \code{(} makes \code{(} a mid-token byte and de-certifies it everywhere, so one token kind with a
near-total interior alphabet removes almost every candidate. Second, \emph{run tokens de-certify their own bytes}:
with whitespace tokenized as a run over space, tab, and newline, the second newline of a blank line is consumed by the
run's continuation state, so newline itself becomes mid-token-consumable. The byte's own token kills it, which is easy
to miss.

Third, and in the same family, \emph{a multi-byte token de-certifies every byte that can follow its first}. The study's
first row spells each operator as a single byte, so each is consumed only at the initial state and all twenty-four
certify. Section~\ref{sec:evaluation} measures two different C-like grammars, and the last two rows of
Table~\ref{tab:applicability} name them rather than describing them, since they disagree. The construction-cost grammar,
\code{keyword\_scale\_builder()} in \code{tools/benchmark/src/main.cpp}, spells operators as literals including
\code{++}, \code{==} and \code{->} and admits a decimal point inside a number. Every byte occurring after the first
position of some operator therefore has a live mid-token state on it, and seven candidates fall that way: \code{= < > \&
| + -}. The decimal point falls independently, not as an operator continuation but because \code{decimal\_float()}
admits it inside a number, bringing the total to eight. The scaling grammar, \code{build\_lexer(false)} in
\code{tools/benchmark/src/harness.cpp}, has a smaller operator set in which every multi-byte operator has \code{=} as
its only continuation byte, so only \code{=} is lost and thirteen of its fourteen candidates certify. Both collapses are
partial. What they show is that ``operators certify'' is a claim about the particular literals a grammar registers, so
the grammar has to be named rather than described: these two C-like benchmark grammars recognize different operator and
number languages, and they certify different sets. The sharper claim, that certification depends on the tokenization and
not on the recognized language, needs a pair recognizing the same language, and the conventional and split-friendly rows
below are that pair: identical accepted input, \mbox{differing} only in whether newline is folded into the whitespace
run or carries a kind of its own, and certifying nothing versus certifying newline. The kinds therefore differ by
exactly that one token, which is the change under study.

All three mechanisms point at the same design lever: certification is a property of the \emph{tokenization} rather
than an intrinsic one, and a tokenization can sometimes be refactored without changing what is recognized at all. The
split-friendly row of Table~\ref{tab:applicability} keeps the identical C-like language but tokenizes newline as its
own single-byte token, leaves spaces and tabs as whitespace runs, and keeps strings and line comments line-bounded.
The row directly above it is the same language under the conventional tokenization and certifies nothing, which
isolates the change. Then newline certifies, and the folklore rule follows as a practical corollary: for tokenizations
in which newline is its own token, line-based splitting is sound when no other token can contain a newline. The
automaton-level statement remains the exact one; ``no token spans a line'' is sufficient here but not necessary in
general, since a token that merely \emph{begins} with newline leaves the certificate intact. Adding block comments,
the one token kind in the split-friendly tokenization studied here that spans lines, destroys the certificate again,
which is the formal shape of both ``you cannot chunk C by lines'' and the quoted-newline problem that pushed CSV
parsing into speculation.

Four further rows price that lever, and the price is smaller than it looks. The obvious repair for the block-comment
collapse is to stop the token kinds from colliding: bar every string and comment body from holding a byte that opens
another of them. The row for that grammar certifies no useful byte, exactly or modulo, so the expensive restriction buys
nothing at all. The row below it restores the ordinary string and comment rules, whose bodies may hold one another's
opening bytes, and adds one restriction instead, that the block comment may not cross a line; both rows sit on the
split-friendly base, where newline is already a token of its own. Newline certifies outright. Line-boundedness is
therefore the lever and separation is not, and a language designer who wants a certified split byte pays for newline as
a token and one line-bounded comment form rather than for the ability to write a slash inside a string.

The last two rows are that choice already made. Zig's reference~\cite{zig2026reference} states that there are no
multiline comments and that each line of code can be tokenized out of context, and its multiline strings are written one
\code{\textbackslash\textbackslash} line at a time. The subset is adapted from the released 0.16.0 reference's grammar
appendix over the C-like base of the rows above, with the adaptations stated. Transcribed are its byte classes, the
well-formed UTF-8 sequences its table lists, printable ASCII in string and character bodies and every byte from space
upward in line bodies, and from them the bodies of its strings, quoted identifiers, character literals, line comments
and line strings, with the strict escapes and the one-character literal that grammar has. Adapted are the tokens around
those bodies: the whitespace bytes of its \code{skip} rule, space and newline, are read as tokens of their own, one
whitespace run in the conventional row and newline apart from the space runs in the split-friendly row, and so are the
line comments, where the grammar folds \code{skip}, whitespace and line comments alike, into every token's tail and
gives a line string and a doc comment, of either spelling, a trailing run of space and newline bytes of their own
besides, so here a line string ends at the newline, the run after its body left outside it, and a multiline string and a
doc comment, each of which the grammar groups over consecutive lines into one token, are read one line at a time; and
the plain, doc and container-doc comment spellings, which the grammar tells apart, are one token here, since as byte
strings they are together every line beginning \code{//}. In its byte classes the subset follows the appendix rather
than the language of conforming source: the reference's source-encoding section forbids the delete byte and malformed
UTF-8 everywhere, which the appendix's line bodies admit, and so do the subset's. Tab, which the reference's source
encoding admits as a token separator, and carriage return, which it admits only immediately before a line feed, occur in
no rule of that appendix, so they occur in no token of the subset and are certified vacuously and withheld. A shipped
language chose the property this section formalizes, for its own reasons, and the rows show what it bought: newline
modulo the discarded set under the conventional tokenization, and newline certified outright once newline is its own
token. The pair beside the C-like rows is the same comparison the conventional and split-friendly rows draw, run on a
subset adapted from a shipped language's grammar rather than designed for the property.

The JSON row of Table~\ref{tab:applicability} appears to contradict the literature, and the reconciliation is about
the equivalence each result preserves rather than about the automaton. Prior claims that JSON may be divided at
newline~\cite{barenghi2015parallel,li2021plex} rest on a weaker one. Barenghi et al.'s sequential lexing algorithm
appends a byte to the lexeme buffer only when it is not reading a whitespace character outside a string or a comment,
so whitespace is unobservable in its output; Plex restates the claim on the different ground that newline occurs in no
lexeme, which holds only once whitespace runs are not themselves counted as lexemes. Either way the division can split
one maximal whitespace match into two discarded matches whose lengths sum to the original, leaving the surviving
stream unchanged. Theorem~\ref{thm:main} preserves something stronger, the complete sequence of emitted kinds and
lengths, and under that semantics a newline inside a maximal whitespace token is not a certified boundary. The grammar
studied here follows the RFC 8259 lexical forms~\cite{bray2017json} over byte input, with the full string escapes,
signed numbers with fraction and exponent, the three literal names, and whitespace runs emitted as tokens, so it is
refused exactly as it should be. It is a lexer over bytes rather than a conforming JSON processor: string interiors
admit any byte from \code{0x20} up except quote and backslash, so UTF-8 well-formedness, which RFC 8259 requires of
JSON exchanged outside a closed ecosystem, is assumed of the input rather than checked here. That is orthogonal to
certification, since validating it would only remove bytes from string interiors and so could not de-certify anything
that certifies now. The two results do not conflict; they quantify over different streams, and this paper supplies
both. Definition~\ref{def:certified} rejects newline for JSON, since the whitespace continuation state consumes it
either way, and Theorem~\ref{thm:trivia} accepts it once whitespace is declared discarded: that is precisely the JSON
row of Table~\ref{tab:applicability}, none exactly and tab, newline and carriage return modulo $I$. The prior claim is
therefore not merely reconciled but derived, from the token set rather than from a reading of the format. Recovering
certification under the \emph{exact} equivalence still needs the tokenization changed, by making newline its own token
as the split-friendly row does.

The right column of Table~\ref{tab:applicability} places the whitespace and comment tokens of each row in $I$ and reads
the relaxed condition. Eight of the fifteen rows gain bytes, and the pattern behind them is the second of the three
collapse mechanisms named above: wherever whitespace is a run token the caller discards, its own bytes come back,
because severing a whitespace run at a whitespace byte yields two whitespace runs and both are deleted. That is the one
mechanism discarding undoes. The other two survive it. A byte a string literal or a block comment absorbs is recovered
nowhere: a state inside either accepts nothing at all, so the left chunk cannot end there on a complete discarded token,
which is the first thing $C$ requires. A byte a kept multi-byte token continues past is likewise not recovered: the
state that has consumed its first byte can still reach the kept token, so the third clause of $C$ fails, and rightly,
since severing a kept token is a difference the deletion preserves.

In the exact column each outcome reports a fact about the tokenization rather than a limitation of the test, since
Theorem~\ref{thm:necessity} makes that condition necessary as well as sufficient. The modulo column carries no such
licence: it is conservative, so a byte absent there may still be safe modulo $I$ and merely refused, and only its
positive entries are statements about the pair of token set and discarded set. Newline is recovered for the conventional
C-like token set because line-bounded strings and \code{//} comments cannot contain one, so a whitespace run is the only
token whose interior admits it. Space is recovered in no C-like row carrying strings or comments, since it sits legally
inside both; the first row, which has neither, recovers it along with tab and newline. The JSON row is the same lexer
over bytes, its UTF-8 caveat unchanged from above. JSON gains tab and carriage return as well as newline because RFC
8259 excludes every raw byte below \code{0x20} from string interiors, which confines all three to whitespace, while
space is admitted there and so is not recovered. That the recovered set is exactly JSON's whitespace minus space is a
consequence of the RFC's own exclusion, not a coincidence. The rows carrying unrestricted block comments continue to
hold no useful certificate, because a comment severed at a newline leaves \code{/*\,a} and \code{b\,*/}, and both halves
re-tokenize completely: in this token set they become the operators \code{/} and \code{*} and an identifier, six kept
tokens in place of one discarded comment, which is exactly the difference a caller can see. The correct conclusion
carries the certificate's own scope: this tokenization of a C-like language with unrestricted block comments cannot be
split at every occurrence of a byte value chosen from the grammar alone, newline included, on completely tokenizable
inputs, without state, overlap, speculation or repair, and the certificate says so. Schemes that inspect the document
and split at some occurrences but not others are outside the claim.

Four rows are cross-checked by splitting as well as by reading the predicate: the conventional and split-friendly C-like
pair, the split-friendly row with block comments, and JSON. On those the condition agrees with brute-force splitting on
every candidate byte the corpus exercises, which is a declared set of sixteen bytes, the four whitespace bytes, eleven
representative structural characters and the letter \code{t}, not all 256. The conservatism measured in
Section~\ref{sec:decide} therefore did not appear on the bytes tested, which is weaker than exactness on those token
sets and is all the artifact checks.

The practical content is narrower than a count of moved cells suggests, and more useful. The exact certificate already
admitted newline for the split-friendly tokenization in which newline is its own token and whitespace runs exclude it.
That is a redesign of the token set imposed on the user by the parallelization technique. Theorem~\ref{thm:trivia}
removes the imposition: the conventional whitespace rule that hand-written lexers already contain certifies newline
without modification, so the conventional and split-friendly rows give the same answer and the redesign buys nothing
for a caller comparing streams modulo $I$. The relaxation is better read as widening the class of token sets the
method accepts as written than as adding split points to a fixed one.

\subsection{Framing at the producer}
\label{sec:framing}

Definitions~\ref{def:certified} and~\ref{def:trivia-certified} quantify over the automaton, the second also over the
declared discarded set, and mention no input. Certification is therefore static, in the token set alone for the exact
condition and in the pair of token set and $I$ for the relaxed one, and no transformation applied when reading a
document
can make a byte certified that was not. This rules out an appealing idea directly: one cannot normalize input at read
time to manufacture split points. It also explains why the obvious attempt is circular, since deciding which
occurrences of a byte lie inside a token is the sequential problem that chunking was meant to avoid.

What can be arranged is a guarantee about a particular document, supplied by whatever produced it. Reserve a byte $r$
that no token admits at any noninitial position, give it a rule of its own, and require the producer to emit it only
at token boundaries and to escape it wherever it would otherwise appear inside a token. Noninitial exclusion is the
load-bearing clause: were some accepted token $urv$ to contain $r$ noninitially, a scan could consume $r$ from the
live state after $u$ and the byte would not certify. Excluded everywhere but the start, the reserved byte's only live
consumption is from $q_0$, and Definition~\ref{def:certified} admits it outright with no appeal to
Theorem~\ref{thm:trivia}.
Applied to the block-comment row of Table~\ref{tab:applicability}, whose certified set is empty under both conditions,
reserving \code{0x1E} yields a certified split symbol, and a framed document splits at every occurrence with the token
stream reproduced exactly, including across block comments spanning lines and string literals containing punctuation.

None of that construction is new. ASCII reserves \code{0x1E} as \textsc{record separator}, one of four information
separators defined for exactly this kind of framing~\cite{cerf1969ascii}, JSON text sequences standardize precisely that
byte as a record prefix~\cite{williams2015rfc7464}, and informal newline-delimited JSON makes the same bargain in
different notation without the reserved byte, while length-prefixed framing and columnar row groups buy random access by
other means and are analogies rather than instances; Barenghi et al.~\cite{barenghi2015parallel} reach a comparable
place by constraining the source language instead. Chassot and Kun{\v{c}}ak~\cite{chassot2026ziplex} study a different
but closely related maximal-munch boundary question for invertible lexing. Given already formed adjacent tokens, their
sound R-Path relation checks whether the first token's text followed by the next token's first symbol remains a prefix
of any rule; their separator construction can instead insert a separator during printing and compare the re-lexed stream
modulo separator tokens. These operations protect known token sequences during printing and recombination; they neither
locate cuts in raw input nor derive bytes safe at every occurrence before lexing. What the certificate contributes is
the check. A format designer choosing a delimiter is making a claim about every token the grammar admits, and
Definition~\ref{def:certified} decides that claim from the compiled automaton rather than leaving it to inspection. Read
this way an applicability table is not only a report of where the method applies but a design rule for formats intended
to be chunked. A row whose exact column is empty is a warning, by Theorem~\ref{thm:necessity}, that no single-byte
delimiter preserves the exact stream without changing the token set; an empty modulo column is only a refusal, since
that condition is conservative.

The limit is equally clear. A document already written cannot be reframed without reading it, so this recovers nothing
for a compiler consuming source it did not generate. It applies where the writer cooperates, which covers generated
code, logs, exports and interchange formats, and not to arbitrary input.

\section{Evaluation}
\label{sec:evaluation}

The evaluation reports two machines. The first is the one the method was developed on and is retained because the
contrast with the second is itself a result: a figure that looks like a property of the API on one does not hold its
value, or even its sign, across environments and benchmark revisions.

Everything measured here uses the exact certificate. Both \code{chunk\_boundaries()} and the parallel executor built on
it consult the exact map, so no figure in this section reports what the relaxed condition would buy at run time.
Section~\ref{sec:applicability} establishes which bytes it recovers, and that is a statement about token sets rather
than about throughput; a token set that certifies no useful byte exactly can gain a boundary set from the relaxation, as
five of the eight such rows do and the block-comment rows do not, eight of the fifteen gaining bytes in all, but turning
that into a measured speedup needs a planner this library does not yet ship.

\paragraph{Environment A, virtualized.} An Intel Core i9-12900K, a hybrid part with eight performance cores carrying
two-way SMT and eight efficiency cores, so sixteen physical cores and 24 logical processors, with 32\,GiB of memory.
The host is Windows 11 build 26200 with virtualization-based security enabled; the measurements ran under WSL 2.7.11
on kernel 6.18.33.2-microsoft-standard-WSL2, Ubuntu 24.04.4 LTS, glibc 2.39, with all 24 logical processors and
15\,GiB of memory visible to the guest. Compiled by GCC 13.3.0 at \code{-O2}. Corpora are 16\,MiB and fixed-seed,
scenarios run in a fixed order, and the archive holds summaries only (\code{data/benchmark.txt}), whose header
records the machine, operating system, compiler and command; the Windows build, WSL and kernel versions, memory and
topology details above are author-recorded rather than archived.

Four properties of this environment bound what its figures establish, and we state them rather than adjust for them. The
guest does not see the host's hybrid topology: it reports twelve cores of two threads each, so a performance core is
indistinguishable from an efficiency core from inside the measurement. Threads are not pinned, and could not usefully be
while that holds. The guest is given no \code{cpufreq} interface, so the clock is neither fixed nor observable. And the
machine was not quiesced. The corpora also fit in cache, 16\,MiB against a 30\,MB last-level cache, so the run targets
warm-cache conditions rather than certifying residency throughout.

\paragraph{Environment B, bare metal.} An AMD Ryzen 9 9950X3D, sixteen physical cores with two-way SMT across two L3
domains of eight cores each, 128\,MiB of L3 in total, 59\,GiB of memory, Ubuntu 26.04 on kernel 7.0.0-28-generic,
compiled by GCC 15.2.0 at \code{-O2}. The \code{performance} governor was set, which biases the clock toward its maximum
but does not fix it: boost remains enabled, and the archived \code{lscpu -e} of the pinned run reported in
Table~\ref{tab:scaling} shows 22 distinct frequencies between 624 and 5711\,MHz across the 32 processors at the instant
it was taken. The clock is therefore observable here, unlike in Environment A, but not controlled. The topology is
visible and archived; the ten scaling scenarios run in interleaved rounds reshuffled per round from a fixed seed, while
the construction, planning and thread-launch scenarios run as blocks; and corpora sweep 1, 16, 128 and 512\,MiB in that
fixed ascending order. Every pass of the ten scaling scenarios is recorded individually in
\code{data/bare-metal-unpinned-run2/} and \code{data/bare-metal-pinned-run2/}; the construction, planning and
thread-launch figures below are summaries over 15 passes, not per-pass records.

This environment is measured in two placements, each run twice, giving four archives. The unpinned runs may place
threads on any of the 32 logical processors. The pinned run confines the process to the eight physical cores of a
single L3 domain (\code{taskset -c 0-7}), which excludes the SMT siblings and the other L3 domain from the set the
scheduler may use, so eight threads have eight distinct physical cores available to them. The archives certify the
width of that mask but not its members: the collector recorded \code{nproc} as 8 from inside the affinity mask, while
the archived \code{lscpu -e} describes the whole machine, and neither the mask itself nor the launch command was
captured, so the specific cores, the single L3 domain and the SMT exclusion rest on the collection notes rather than
on the artifact. The mask constrains placement rather than fixing it: threads remain free to migrate within those
eight cores, and nothing prevents two from sharing one. It was not quiesced either: load averages stood at 0.55 and
1.83 before the unpinned and pinned runs of Tables~\ref{tab:scaling} and~\ref{tab:planning}. The two started two
minutes apart, so the second figure is plausibly the first run's own residue, though the timing makes that likely
rather than established; either way the machine was busy when the second run began.

The 512\,MiB corpus is four times the machine's aggregate last-level cache, and the pinned process is confined to one of
the two L3 domains, so it exceeds whatever share is actually reachable by a comfortable margin. The archive does not
record the two domains' capacities separately, so the sweep is read conservatively: 512\,MiB is out of cache under any
split of the 128\,MiB aggregate. Before the scaling rows it protects, and in both environments, the benchmark checks
that the eight-chunk token stream on each corpus and size is identical to the serial one by an exact (kind, length)
comparison; the timed rows carry a lighter per-pass check, each chunk's consumed length against a recomputed plan and
the total token count.

Both corpora are favourable to the method by construction, and the scaling figures should be read with that in mind.
Neither contains a string literal or a comment, the two constructs that absorb the alphabet and empty the certificate,
and the grammar driving them certifies 13 of its 14 operator and punctuation candidates. What follows therefore measures
how the parallel machinery scales on a token set the certificate suits, and is not evidence about how often a token set
does suit it; Table~\ref{tab:applicability} is what speaks to that, and eight of its fifteen rows admit no useful exact
certificate at all. The two questions are separate, and a grammar can pass the first while failing the second: a
certificate admitting a byte the corpus never carries yields no parallelism, as automatically generated JSON without
newlines already shows.

\subsection{Scaling}

Table~\ref{tab:scaling} reports median throughput on Environment B, pinned, which is the most constrained
configuration measured: visible topology, eight physical CPUs of one L3 domain with the SMT siblings excluded, a clock
biased to maximum, and a corpus sweep that leaves the cache. Two baselines matter, they answer different questions,
and only one of them turns out to hold still between collections. The parallelism itself is measured against the same
validating wrapper driven with one requested chunk, which plans, uses the per-chunk sink, and spawns nothing. Because
the per-pass validation recomputes the plan, every chunked row, this baseline included, plans twice per pass; planning
at bytes as frequent as these costs a fraction of a microsecond (Table~\ref{tab:planning}) against milliseconds to
hundreds of milliseconds of scanning, so the ratios are unaffected and the comparison against the plain scan is
conservative. Against the one-chunk baseline the certified splitting reaches $1.96\times$, $3.87\times$, and
$7.63\times$ on the 512\,MiB dense corpus, or 98\%, 97\% and 95\% parallel efficiency; the other collection of the
same placement gives 95\%, 93\% and 93\%. A user choosing between the serial and parallel entry points compares
instead against the plain scan, and gains $3.46\times$ at four chunks in this collection and $3.94\times$ in the other
collection with the same placement; end to end here means the benchmark wrapper as described above, validation path
and second planning pass included. Conflating the two, as an end-to-end table alone would, charges parallelization for
the sink and inherits an unstable denominator.

\begin{table}[t]
\centering
\begin{tabular}{lrrrrrr}
\toprule
Corpus & size & plain scan & 1 chunk & 2 chunks & 4 chunks & 8 chunks \\
\midrule
dense (1.83 B/token)  & 1\,MiB    & 767.5 & 728.5 & 1436.1 & 2780.4 & 4685.9 \\
                      & 16\,MiB   & 771.5 & 686.9 & 1374.3 & 2679.5 & 5074.6 \\
                      & 128\,MiB  & 788.4 & 722.8 & 1432.1 & 2782.3 & 5420.4 \\
                      & 512\,MiB  & 818.0 & 730.1 & 1434.2 & 2828.0 & 5568.3 \\
\midrule
source (3.53 B/token) & 512\,MiB  & 757.5 & 711.1 & 1415.0 & 2763.0 & 5500.5 \\
\bottomrule
\end{tabular}
\caption{Median throughput in MiB/s on Environment B with the process confined to an eight-CPU affinity mask, one L3
domain per the collection notes, second run. The
one-chunk column is the parallel API without parallelism, and is the baseline the parallel efficiencies use; the
plain-scan column does not hold still between collections and is discussed below.}
\label{tab:scaling}
\end{table}

In the measured sweep, on this machine, efficiency did not deteriorate as the input footprint exceeded the aggregate
last-level cache capacity. The 512\,MiB corpus is four times the machine's 128\,MiB of last-level cache, and its
eight-chunk efficiency, 95.3\%, is the highest of the four sizes rather than the lowest: the figures are 80.4\%,
92.3\%, 93.7\% and 95.3\% at 1, 16, 128 and 512\,MiB, so efficiency rises with input size and keeps rising across the
cache boundary rather than stepping down at it. The sweep does not locate a break-even input size, because it never
reaches one: even at the smallest size measured, eight chunks are $6.43\times$ the one-chunk API. What it shows is
where fixed overhead becomes visible, at 1\,MiB, where efficiency falls to 80.4\%; coordinating an eight-thread scan,
seven spawned workers joined by the calling thread, costs about 40\,$\mu$s against a scan of roughly 1.4\,ms, which is
consistent with the drop without isolating coordination as its only cause.

\paragraph{Observed differences between affinity runs.} Measured unpinned, free to use all 32 logical processors, the
same sweep shows lower eight-chunk efficiency from 95\% to 93\% at 512\,MiB and from 80\% to 68\% at 1\,MiB in the
second revision, a gap of twelve points at 1\,MiB and under three at 512. It does not narrow monotonically: the gap is
narrowest at 128\,MiB, one point, and widens again at 512. The first revision has the same shape one level down,
92.6\% to 88.5\% at 512\,MiB, so the placement cost there is four points rather than three. The two- and four-chunk
figures move by about 2.2 points or less at 128 and 512\,MiB and by more at the smaller sizes. One hypothesis is SMT:
with eight threads free on 32 logical processors, some pairs share a physical core. We do not claim it, and cannot
from this data. For each benchmark revision there is one run per placement, taken at different times rather than as
replicated interleaved conditions, and a gap also appears at two chunks where collisions should be rare. Attributing
it would need paired repeated runs, or per-thread CPU residence recorded during the scan. What the two runs support is
narrower: on these runs the restricted CPU set had the higher eight-chunk efficiency, though not always the higher
absolute eight-chunk throughput, and certified splitting delivered most of its benefit in both.

\paragraph{The two collections are not a controlled repeat.} Environment B was collected twice in each placement, but
the two collections were taken at different commits: between them the benchmark began writing round-trip-safe CSV
values and gained the four plan-and-execute scenarios that now run before the scaling sweep. The scaling code itself
is unchanged, but the executable, its layout, and what executes before the measurement are not. These are therefore
two benchmark revisions on one machine rather than repeated runs of one experiment, and the difference between them
cannot be attributed to the machine.

The largest differences are in the single-threaded rows, and their direction is the same under both placements. At
512\,MiB on the dense corpus the plain scan rose 14.3\% pinned and 17.4\% unpinned, while the one-chunk row fell 3.7\%
and 3.8\% respectively. At that size the multi-chunk rows move much less: about 0.8\% on the dense corpus pinned and
about 1.4\% on the source corpus, 1.1\% and 0.6\% unpinned. That stability is specific to 512\,MiB and does not hold
across the sweep, where multi-chunk rows move by as much as 5.9\% pinned and 6.3\% unpinned, both at 16\,MiB. A
single-threaded shift that lands within 0.2 points under two different affinities, 0.17 on the dense corpus and 0.01
on the source one, looks more like a property of the revision than of the scheduler, though nothing here isolates
which.

One estimator is used throughout, and it is worth naming because a second one is available and does not always agree.
Every ratio quoted here divides one scenario's median throughput by another's, medians taken over that scenario's
passes; it is not the median of the per-round paired ratios, which would weight rounds equally rather than summarizing
each scenario first. The ratios are recomputed from the archived per-scenario medians rather than printed and asserted
by the harness, so the archive carries the checkable numbers and the divisions are ours. The two do diverge: at 512\,MiB
pinned, two chunks over one is $1.96\times$ as a ratio of medians and $1.99\times$ as a median of paired ratios. The
paired view appears below only as same-round win counts; paired medians are not reported.

The comparison between the plain scan and the one-chunk row is the quantity this moves. It has now taken three values:
14\% \emph{below} on Environment A, 5.9\% \emph{above} in the first Environment B collection at 512\,MiB, winning 13
of 15 same-round pairs, and 10.7\% below in the second, reported in Table~\ref{tab:scaling}, winning none of 15. We
report that rather than explain it. The consequence for the figures above is that efficiencies measured against the
one-chunk baseline moved by two to four points between the collections, 93\% to 95\% at eight chunks, while the
end-to-end ratio, which divides by the plain scan, moved by 12\%. That is why the first is quoted as a range of a few
points and the second as a range of nearly half a turn.

\subsection{Planning cost}
\label{sec:planning}

Table~\ref{tab:planning} measures the planner alone, over a 16\,MiB input divided into eight chunks, as certified
bytes grow scarce. The certified byte is newline; the inputs place newlines every 40 bytes, every 1\,MiB, and never,
and the last row uses a token set that certifies nothing at all. Scarcity is not a contrived case: Plex gives
automatically generated JSON, which usually lacks newline characters, as the reason splitting there is
infeasible~\cite{li2021plex}, and Barenghi et al. make the general point that a language need not offer any separator
identifiable from a bounded window at all~\cite{barenghi2015parallel}.

\begin{table}[t]
\centering
\begin{tabular}{lrrr}
\toprule
Certificate density & chunks planned & median & bytes scanned \\
\midrule
newline every 40 bytes    & 8 & 0.1\,$\mu$s     & $<$ 1\,KiB \\
newline every 1\,MiB      & 8 & 1021.0\,$\mu$s  & 3.5\,MiB (derived) \\
certified byte absent     & 1 & 16365.5\,$\mu$s & 56\,MiB (derived) \\
nothing certified         & 1 & 0.0\,$\mu$s     & 0 \\
\bottomrule
\end{tabular}
\caption{Boundary planning for $k=8$ over 16\,MiB on Environment B, pinned. The two scanning rows show similar
effective scan rates, 3.35 and 3.34\,GiB/s, consistent with the $O(kN)$ bound; the last row is the empty-certificate
fast path. Each pass repeats
the plan until the sample outlasts the clock, so the sub-microsecond rows measure planning rather than timer
resolution. The bytes-scanned column is derived from the search geometry rather than instrumented: seven interior
boundary searches of half a megabyte each give $7 \times 0.5 = 3.5$\,MiB, and with the byte absent the searches from
the boundaries at $2, 4, \ldots, 14$\,MiB each run to the end of the input, $14 + 12 + \cdots + 2 = 56$\,MiB.}
\label{tab:planning}
\end{table}

The two middle rows cross-check each other: 3.5\,MiB scanned in 1.02\,ms and 56\,MiB in 16.4\,ms give similar
effective scan rates, which is consistent with the $O(kN)$ bound and indicates that the planner's cost is dominated by
the forward searches rather than by anything else it does. Even its worst case, a certificate whose byte never occurs
in a 16\,MiB input, costs 16.4\,ms once; a caller planning a much smaller chunk count, or planning once and scanning
repeatedly, pays proportionally less. The empty-certificate case is answered without touching the input.

That worst case is a cost the caller pays before any scanning begins, and on such an input the planner returns a single
chunk, so the scan that follows is the serial one. Stating it end to end needs planning and scanning timed on one
grammar over one input, which the earlier planning rows do not provide: they use a two-token grammar over a synthetic
input while the scaling rows use the C-like grammar over generated source, and adding those would compare different
lexers over different inputs. A separate scenario therefore times both phases together on the planning workload. With
the certified byte absent, the parallel entry point takes 34.5\,ms against 18.2\,ms for the serial one, so choosing it
costs about $1.9\times$; with certified bytes every 40 bytes it takes 1.7\,ms against 12.4\,ms, a gain of about
$7\times$, both ratios of one-decimal summary medians. Both report the same aggregate token count, 838861 tokens on the
one workload and one token on the other. The penalty is a property of the planner as implemented, not of the
certificate. The planner here is deliberately simple, one forward search per boundary, and the one-pass $O(N + k)$
formulation above would reduce that worst case to a single pass over the input, an estimated 5\,ms at the observed scan
rate rather than a measured figure, without eliminating it; the distinction it draws attention to is that
\emph{deciding} the certificate needs no input at all, while \emph{locating} its occurrences is a runtime scan whose
cost depends on how often the byte actually appears.

\subsection{Correctness}

Three levels of evidence support the implementation, and they check different things. The benchmark's preflight compares
the serial stream against planned chunks scanned independently, exactly and token for token, at eight chunks on each
corpus and size, before the scaling rows it protects. The unit suite carries the counterexamples of
Section~\ref{sec:subtlety}, including the nullable and cyclic re-entry cases. A fuzzer generates arbitrary token sets
and inputs, checks every planned boundary against the predicate, and exercises the actual parallel executor, comparing
the concatenated stream against the serial one on every execution; equality is asserted for inputs the serial scan
consumes completely, and prefix equality for malformed ones, matching the theorem's scope. The pinned workflow
configures a bounded fuzzing job on every continuous-integration run. No violation has been reported by those jobs or by
longer local runs; because their logs are not archived, those outcome claims are author reports rather than checkable
artifact claims.

\section{Limitations and future work}

The result is scoped to one fixed token automaton restarting from one fixed initial state. A scanner carrying state
the automaton does not represent falls outside it: lexer modes, indentation stacks, semantic predicates, and
hand-written scanning for constructs the regular grammar cannot express all carry information across token boundaries,
and to remain within the present theorem such finite external state must be folded into an enlarged automaton;
otherwise it must be recovered or communicated at chunk starts, or conservatively fenced into regions the plan
does not cross. The
theorem also specifies the concatenated token sequence, not callback interleaving: the sink must tolerate
concurrent calls, and order-sensitive effects need per-chunk buffering followed by ordered replay.

The approach trades generality for certainty. When the grammar does not cooperate it offers no usable split points, by
design, and the composition and speculation families of Section~\ref{sec:prior} remain the applicable answers we know
of. Several extensions look natural, and the first has since been carried out: a companion report generalizes the
certificate from single bytes to short byte windows, certifying a cut at a fixed offset inside every occurrence of a
multi-byte string, which recovers splitting for some grammars where no single byte
certifies~\cite{nidhogg2026splitwindows}; the certificate of this paper is exactly its length-one case. A hybrid plan
could split at certified bytes where they exist and fall back to
speculative entry elsewhere, keeping the guarantee where it is free and paying for it only where it is not. On
completely tokenizable input, a candidate scan begun in a live state cannot emit a token crossing the cut immediately
before a certified occurrence, since an emitted token's match path lies in live states and only a non-re-entrant $q_0$
consumes the certified symbol among them. If such a scan commits emitted tokens past the occurrence, it places a
boundary there; the serial scan places the same boundary, and at it both restart in $q_0$, so their continuations agree
when run over the same remaining input. Whether a concrete speculative protocol can exploit this to confine
misprediction repair to the gap before the next certified occurrence is future work, as is the distribution of such gaps
over representative corpora. The
grammar-refactoring lever of Section~\ref{sec:applicability} could be automated: under a stated observational
equivalence and edit-cost metric, search for a minimum-cost re-tokenization that makes a chosen byte usefully certify.
The certificate is also orthogonal to a separate worst-case cost of naive longest-match scanning: some token sets and
inputs cause failed lookahead that grows at successive token starts, producing $\Theta(N^2)$ work, while bounded
failed lookahead stays linear. Where sufficiently frequent certified points divide that work into reasonably balanced
chunks, fixed-$k$ parallel execution can reduce its coefficient, but each chunk retains the same quadratic worst case,
and sparse or absent certificates provide no relief: over the tokens \code{a}, \code{a+b} and \code{;} on input $a^N$,
the semicolon certifies yet never occurs, so the planner returns one chunk. The quadratic pathology can be removed
with per-position information, obtained either during the scan or before it: Reps keeps the backtracking loop but
tabulates the state and position pairs that have already failed, so no doomed transition is
repeated~\cite{reps1998maximal}, while the uniform-tokenization algorithms precompute what each suffix admits in an
initial right-to-left pass~\cite{li2025uniform}. Finally, the evaluation rests on two x86-64 machines, one Alder Lake
under virtualization and one Zen 5 on bare metal, with threads constrained only to a CPU set rather than pinned
individually; a non-x86 architecture, and per-thread placement, would strengthen the scaling claims and the
implementation's portability evidence, though not the architecture-independent theorem.

\section{Related work}
\label{sec:related}

\paragraph{Reset words and synchronizing codes.} The live subautomaton $A^+$ is a partial DFA, and on a partial DFA a
reset word is one whose action carries a non-empty set of states to a single state and is undefined
elsewhere~\cite{berlinkov2021partial}. A useful certified symbol satisfies that definition. Certification says that no
state other than $q_0$ has a $b$-transition surviving in $A^+$, and usefulness says $q_0$ has one, so the action of
$b$ on $A^+$ is defined on $\{q_0\}$ alone and carries it to $\delta^+(q_0, b)$. Every useful certificate is therefore
a one-letter reset word of $A^+$ whose action domain is $\{q_0\}$; a vacuous one is a letter whose action is nowhere
defined. The converse need not hold: certification additionally requires that $q_0$ is not re-entrant, a maximal-munch
boundary condition absent from the reset-word definition. The identification is specific to the partial-DFA convention
of Berlinkov et al.~\cite{berlinkov2021partial}: under the classical complete-DFA
reading~\cite{volkov2008synchronizing}, in which a reset word must map \emph{every} state to a common state, the
correspondence fails, and a scanner's live automaton is partial by construction.

The connection is more than terminological. A code is synchronizing when it admits a word $w$ such that an occurrence of
$ww$ lets a decoder resume independently from the position after the first $w$, which is what allows a coded message to
be decoded in parallel~\cite{berlinkov2021partial}. Theorem~\ref{thm:main} has that shape, and UTF-8 is its everyday
case: any byte beginning a form resynchronizes a decoder, and Section~\ref{sec:certificate} recovers exactly those bytes
as the useful certificates from the compiled table. What is new here is therefore not that such a letter permits
independent resumption, but what surrounds it: the re-entrancy side condition, which ties the reset to \emph{token
boundaries under maximal munch} rather than to automaton states alone, so the resumed scan agrees with the serial one on
token lengths and kinds and not merely on where it sits; the characterization of Section~\ref{sec:necessity}, which
makes the condition necessary as well as sufficient on the trim live automaton, so that a token set certifying no useful
byte has no byte that both occurs in a completely tokenizable input and universally begins a token; the derivation of
the entire certified set from a compiled token set at build time, rather than the assumption that a distinguished word
occurs; and the relaxation modulo discarded tokens, which has no direct counterpart in the cited synchronizing-code
model, which designates no discarded output-token class. The equivalence behind that relaxation is not itself the
novelty: ZipLex's separator construction proves a printing guarantee modulo inserted separator
tokens~\cite{chassot2026ziplex}, but there the class marks separators inserted between formed tokens, where here it
marks tokens a raw-input cut may sever.

Certified split points differ from simultaneous automata in requiring no enlarged automaton and no composition, from
speculation in requiring no guess and no repair, from $k$-locality in being a per-symbol property derived from the token
loop's reset rather than a uniform property of the automaton, from the local parsability that operator precedence
languages enjoy~\cite{barenghi2015parallel} in being a property of the token automaton that licenses a scanner to
restart at a byte rather than a property of the syntax grammar that permits substrings to be parsed independently using
bounded surrounding context, from realignment after the fact in merging nothing, and from delimiter folklore in deriving
the safe set from the compiled token set rather than assuming it per format. The derivation also correctly refuses the
grammars for which the folklore is unsound. The contribution is therefore narrow and specific. Relocating cuts to
candidate separators is classical. Barenghi et al. split JSON at whitespace with explicit ambiguity resolution, observe
newline as a safer but potentially sparse alternative, and constrain Lua before using newline as a split
character~\cite{barenghi2015parallel}. What appears to remain missing is an automatic grammar-only test deciding when a
byte permits direct restart from $q_0$. Among the surveyed generators, re2c~\cite{trofimovich2020re2c} validates a
user-selected end-of-input sentinel by rejecting rules in which it may occur before the end of a lexeme, a check its
documentation records as added in release~1.3 in December 2019~\cite{re2cdocs}. That is a terminal-in-lexeme test for
one declared byte; under the completely-tokenizable-input model used here it is a safe-after property, and it is genuine
static analysis of the compiled automaton rather than a convention. It is incomparable with the certificate rather than
weaker. Over a token set whose only token is \code{ab}, the byte \code{b} passes the sentinel check because it ends the
lexeme, yet it lies inside a token and is not certified; over \code{ba}, the byte \code{b} certifies, because every
occurrence begins a token and the initial state is not re-entrant, yet it fails the sentinel check because the lexeme
continues past it. We are not aware of an implementation that derives the complete set of symbols whose every relevant
occurrence begins a token, nor of one that checks the initial-state condition that guarantee requires. The certificate
supplies that test from the compiled automaton, with no hand analysis and no input-dependent context-recovery pass, and
the study of Section~\ref{sec:applicability} turns the same machinery into a statement about which token sets admit such
bytes at all. The analysis is not merely sound. Theorem~\ref{thm:necessity} shows the condition is also necessary once
the analysis is restricted to $A^+$, which is what the derivation of Section~\ref{sec:necessity} does, so on a compiled
token set the \emph{condition} rejects a byte exactly when a witness input exists that places it inside a token. The
shipped predicate is deliberately narrower: it also withholds vacuously certified bytes, which the condition admits and
for which no witness exists, because no input the token set accepts contains them and a caller could never find one to
split at. That matters for reading Section~\ref{sec:applicability}: an empty exact-certificate entry is a fact about the
tokenization, not a limit of the analysis, while a modulo-column negative remains conservative and therefore
inconclusive.

We are not aware of prior work stating this condition, in particular the re-entrancy requirement, as a static per-symbol
property of the token DFA, though its components are all classical. We checked that claim against the parallel lexing
and parallel finite-automata literature, the theory of synchronizing automata and synchronizing codes, incremental
lexical analysis, verified invertible lexing~\cite{chassot2026ziplex}, and the recent work on
sequential~\cite{reps1998maximal,li2025uniform} and streaming~\cite{li2026streaming} tokenization. Each answers a
neighbouring question: where to restart after an edit, which words reset an automaton and how long such a word must be,
whether a code admits a word after which decoding may resume independently, whether printing a token sequence, plainly
or with inserted separators, re-lexes to the same tokens exactly or modulo the separator class, how to avoid repeated
rescanning, how to recover a chunk's entry state after the fact~\cite{voetter2021gpu}. The question here is which single
bytes a given maximal-munch token set renders safe in advance, together with the token-length guarantee that makes the
answer usable, and we did not find it posed in those terms.

Mytkowicz et al.~\cite{mytkowicz2014dpfsm} enumerate every possible start state and select the correct computation
afterwards, so convergence reduces redundant work rather than repairing a guess; the relaxed certificate instead removes
the need to consider more than one state. Sin'ya et al.~\cite{sinya2013sfa} study the size of the simultaneous finite
automaton, giving worst-case bounds and empirical data, which is the dual question to ours: they bound the cost of
carrying many states, we give a condition under which none need be carried. Plex~\cite{li2021plex} arrives in the same
territory from the other side. It reports splitting JSON at newline as prior work, on the stated ground that newline is
a delimiter ``not allowed in any lexeme'', and then abandons delimiters altogether: a prescanning automaton derived from
the scanner builds a transfer function per chunk, and combining them determines the states each chunk's thread begins
from, which it tries in turn, falling through whenever one would force a backtrack into the preceding chunk. Its
contribution therefore belongs with the composition family above rather than with the equivalence weakened here, and
what this paper adds is the certification: an exact characterization deciding, for an arbitrary token set, when a
delimiter byte is sound, and for an arbitrary discarded set a certificate that is sound but deliberately conservative,
so a refusal there does not decide. Structural prescans for JSON, such as Mison~\cite{li2017mison} and
simdjson~\cite{langdale2019simdjson}, do not exploit a fact like the one Table~\ref{tab:applicability} records. They
search for structural characters, the object and array brackets and \code{:} among them, all of which may legally occur
inside a JSON string, and then suppress the occurrences that do by deriving a string mask from the quote and backslash
positions; Mison's Algorithm 1 constructs exactly that mask, and simdjson performs the same masking branchlessly. Theirs
is a format-specific scan that computes which occurrences are real, where the certificate instead decides, for an
arbitrary compiled token set, which bytes need no such computation. Reps~\cite{reps1998maximal} studies maximal-munch
scanning itself, and the lookahead hazard noted in the discussion of state composition below is a consequence of the
same backtracking behaviour.

\paragraph{Static entry-state properties.} Ko, Jung, Han and Burgstaller~\cite{ko2014speculative} parallelize DFA
membership by speculation and reduce the speculation from a static property of the automaton: for each symbol, the set
of non-error states that some transition on that symbol enters. A chunk is then matched from the states in the set for
the last symbol of the preceding chunk, its reverse lookahead symbol, and where that set is a singleton that state is
the sole non-error candidate, the actual entry state being it or the error sink of their complete automaton, at which
the match has already failed. The certificate is a condition on the other end of the transitions: it asks which live
states have a transition on the byte at all, and admits the byte when $q_0$ is the only one and is not re-entrant. On
one trim automaton, completed by a single error sink for their construction, the useful certificate is the stronger of
the two, since a useful certified byte, one only $q_0$ consumes, has by determinism one live target, so their set is a
singleton whenever the byte is usefully certified, and a vacuously certified byte, one no token contains, has the empty
set, while the converse fails: over the single token \code{ab} the byte \code{b} has one target and begins no token.
What separates the two is what each must reproduce. A membership test needs the state after the symbol, so the sole
non-error candidate is all it needs; a restart from $q_0$ that knows nothing of what precedes needs a token boundary
before the byte, which is why the certificate speaks of sources, requires that $q_0$ is not re-entrant, and is taken
over the trim live automaton $A^+$, where an untrimmed presentation's dead and unreachable states, and any equivalent
copies of live states it carries, can only enlarge their count.

\paragraph{Cut points chosen rather than forced.} Two lines outside compiler construction, content-defined chunking and
document fingerprinting, choose positions in a byte stream by rules of their own, with objectives other than lexical
safety. \citet{bjorner2010chunking} call a chunking method local when the cutpoints of every file are exactly the
positions whose $h$-vicinity, the $2h + 1$ entries around the position, lies in a chosen criterion set; a nonempty
useful certified set is the case $h = 0$ of that shape, a rule the token set imposes rather than one a designer picks,
and a local rule need not respect token boundaries at all. Over files whose entries are drawn independently and
uniformly, they define the slack of two files that agree from a point onward as the distance from that point to their
first common cutpoint, analyze its expectation normalized by the expected chunk length for all but the local maximum
method, whose slack they write they could not estimate accurately, and bound the probability that a long interval
carries no cutpoint, across four chunking methods, the local maximum method's figures under their assumption that no two
positions tie. \citet{schleimer2003winnowing} define the density of a fingerprinting scheme as the expected fraction of
the hashes computed that it selects, under a given input distribution; assuming independent, uniformly distributed
hashes with ties negligible they prove winnowing's density asymptotically $2/(w+1)$ for window size $w$, and a lower
bound of $1.5/(w+1)$ for every local algorithm, one that selects a position from each window of $w$ hashes by that
window's contents alone. The longest fingerprint-free run their Web experiment reports belongs to the competing $0 \bmod
p$ selector; winnowing selects in every window by construction. Neither result transfers here: a certified set is under
no obligation to select in every window, so the coverage the lower bound assumes need not hold, and the density results
rest on an independence model where this paper asks only which bytes a given token set admits. There the rule is chosen
and judged by what it selects, density under coverage for winnowing, slack and cutpoint-free tails at matched expected
chunk lengths for the chunking methods; here the token set forces the rule, Section~\ref{sec:applicability} inventories
what it admits, and how often those bytes occur is a property of the input, not of the grammar.

\paragraph{Relation to state composition.} We computed it for the block-comment row of Table~\ref{tab:applicability},
whose automaton has fourteen states, and the question needs no corpus. A line start follows a newline, so it is either
a token boundary, needing no state at all, or the scan sits in the target of a newline transition out of a live
reachable state, and the token continues only if that target has an outgoing transition. Collecting those targets
yields every mid-token context that can arise on any input whatsoever. For this token set there is exactly one, the
interior of a block comment: every in-comment state moves to the same target on a newline, and the initial state
produces a complete newline token that cannot continue. A line start therefore presents one of two entry contexts,
that state and the boundary, against an automaton of fourteen. The conventional tokenization gives three rather than
two, since a whitespace run may also be open across a line start; neither is close to the state count, which is the
point.

Most line starts are already token boundaries, though nothing short of the scan identifies which. On a corpus of 400
generated documents no failed lookahead after a token's final accepted position crossed a line end; that is the exact
counter \code{composition.cpp} implements, bytes read and discarded beyond the last acceptance never including a
newline. That is a property of the corpus
rather than of the token set: it is an assumption required by schemes composing only scanner states, and a token set
with longer lookahead could violate it on inputs the corpus does not contain. Both counts are asserted by
\code{composition.cpp} in the artifact, which also checks that the contexts a corpus realizes lie inside the
structural bound, and that they are the states this section names.

The comparison should be drawn narrowly, and narrowly is all this measurement supports. What is established is a
structural upper bound on the \emph{entry contexts} a line start can present, two here and three for the conventional
tokenization. That is not the domain a prefix scan for maximal-munch lexing composes over. Yang~\cite{yang1996mealy}
gives the reason: such a scan must represent token output and lookahead behaviour, not merely a map from states to
states, and an automaton unable to recognize a token's start from its first character must be transformed before the
scan applies. The counts here bound the state component of that object and say nothing about the rest. Nothing here
implements a prefix scan or measures its time, memory, function representation or output handling, so this is not
evidence about the cost of composition as a whole. Composition is also available wherever the certificate reports
nothing, which is the larger class.

What the certificate offers is the absence of a mandatory full-input state-composition pass, and the claim needs care
because chunks are not serialized. Mytkowicz et al.~\cite{mytkowicz2014dpfsm} dispatch every chunk in parallel during
a first phase to build its transfer function, resolve the entry states from those, and dispatch the chunks again.
Every byte is therefore visited before any chunk can be scanned \emph{from its resolved entry state} or emit final
output. A certified byte needs none of that: a boundary is found by scanning forward from the ideal offset until one
is hit, a distance that depends on the input and that Section~\ref{sec:planning} measures as certified bytes grow
scarce, and each chunk is scanned once. That difference does not amortize when the number of chunks is small, or when
only part of the input is to be scanned, and it is the reason the two are complementary rather than competing.

\section{Conclusion}

A byte $b$ is a certified split symbol of a compiled token set when no reachable state other than $q_0$ has a
$b$-transition whose target can reach acceptance, and, if $q_0$ has one, $q_0$ is not re-entrant. Every occurrence of
such a byte in a completely tokenizable input begins a token, so chunk boundaries placed there preserve the token
stream exactly, with no speculation, no overlapping chunk scans, and no merge beyond ordered concatenation. The
re-entrancy clause is what makes this true rather than nearly true: without it the natural incoming-transition
criterion certifies $a$ for the token set $a^*$ and splits $aa$ into two tokens where the scanner yields one.

The condition is necessary as well as sufficient once the analysis is restricted to states an input can reach and from
which acceptance is still reachable, so a token set with no useful certified byte has no byte that both occurs in a
completely tokenizable input and universally begins a token, rather than merely defeating the test; the relaxed
condition may still recover one modulo a discarded set. Deriving the certificate is a linear-time sweep of the
transition table at construction time, and the same sweep decides which certified bytes can occur in valid input at
all. That makes the applicability question mechanical, and the method is fragile: a single token kind with a
near-total interior alphabet, such as a block comment, removes every useful certificate from an otherwise cooperative
C-like grammar. On the favorable scaling grammar and 512\,MiB dense corpus, where useful exact certificates are
frequent, exact-certificate splitting reached 92.6--95.3\% eight-way efficiency against the one-chunk wrapper, and a
$3.46$--$3.94\times$ four-chunk ratio against the plain scan, benchmark-wrapper end to end, across the two revisions
in the restricted-CPU-set placement. Both are conventionally rounded ranges on one machine; the unrestricted placement
reaches 88.5\% at eight chunks on the earlier revision. Neither is a confidence interval: the per-scenario
best-to-worst spreads over the median behind them reach about 8.8\% and 14.5\% on the two eight-chunk rows at
512\,MiB. The wider end-to-end range comes from the serial baseline used as its denominator: at 512\,MiB the
multi-chunk rows move far less between the revisions than that baseline does; Section~\ref{sec:evaluation} reports
that separately.

The contribution is therefore narrow and exactly bounded: not another way to lex in parallel, but the
automaton-derived certification step that prior delimiter-based systems have handled through language-specific
reasoning.

Weakening the guarantee from token stream equality to equality after deleting discarded tokens recovers split points
that the exact certificate must reject, with the same constant-time one-bit query. At construction it adds a backward
closure and a second $O(|Q|\,|\Sigma|)$ sweep, plus $O((|Q|+|I|)\log(1+|I|))$ ordered-set work. The recovered cases
include ones that matter in practice: the conventional C-like token set studied here and a JSON lexer both gain newline
without any change to their token definitions, which removes a token set redesign that the exact method had imposed on
its users. An explicit witness establishes that the relaxation is conservative; in the fixed seeded sweep it rejected
97, or 26.8\%, of the 362 exercised pairs that have no counterexample through length eight. Separately, on four
application token sets it agreed with brute-force splitting on all sixteen declared candidate bytes. The relaxed result
remains a precomputed one-bit query; the shipped planner and every throughput measurement use the exact certificate.

Two boundaries are worth restating, because both were initially unclear to us. Certification depends only on the token
set, for the relaxed condition also on the declared discarded set, and never on a particular input, so a read-time
transformation cannot make an uncertified byte certified under the unchanged token set. A cooperating format and
producer can instead reserve a byte, exclude or escape it within tokens, and emit it only at boundaries, which rescues
even a token set containing block comments. And prefix-scan composition, the general alternative, need carry only two
line-start entry contexts in its state component for the measured split-friendly block-comment grammar. That count
bounds the state domain alone, not the object such a scan composes: a maximal-munch prefix scan must also carry
lookahead and token output~\cite{yang1996mealy}. So the state domain is not where the two approaches differ, the
certificate's structural distinction is the absence of a mandatory context-recovery pass over the whole input, and
what composition costs in full is not measured here; the planner's own forward searches can themselves degenerate to
$O(kN)$ aggregate search work when certified bytes are scarce, which Section~\ref{sec:planning} measures.

\section*{Tools}

Large language models assisted with drafting, with checking citations against primary records and with reviewing the
implementation; the author verified each suggestion by derivation, against primary sources, or against the
implementation, its tests and the archived artifacts, and is responsible for all content.

{\small
\bibliographystyle{plainnat}
\bibliography{refs}
}

\end{document}